
\documentclass{birkjour}
\usepackage{graphicx}
\usepackage{amsmath}
\usepackage{amssymb}
\usepackage{mathtools}
\usepackage{algorithm}
\usepackage{algorithmic}
\usepackage{tikz}
\usepackage[hyphens]{url}
\usepackage{hyperref}
\usepackage[hyphenbreaks]{breakurl}
\usetikzlibrary{arrows.meta, chains, positioning, decorations.pathreplacing, matrix, calc, fit}
%
%
 \newtheorem{theorem}{Theorem}[section]
 \newtheorem{corollary}[theorem]{Corollary}
 \newtheorem{lemma}[theorem]{Lemma}
 \newtheorem{proposition}[theorem]{Proposition}
 \theoremstyle{definition}
 \newtheorem{definition}[theorem]{Definition}
 \theoremstyle{remark}
 \newtheorem{remark}[theorem]{Remark}
 \newtheorem*{example}{Example}
 \numberwithin{equation}{section}
 
\def\NN{\mathbb N}
\def\ZZ{\mathbb Z}
\def\QQ{\mathbb Q}
\def\RR{\mathbb R}

\def\<#1>{\langle#1\rangle}

\DeclareMathOperator{\lc}{lc}

\DeclareMathOperator{\lm}{lm}

\DeclareMathOperator{\mon}{mon}
\DeclareMathOperator{\supp}{supp}
\DeclareMathOperator{\tail}{tail}
\DeclareMathOperator{\sdeg}{sdeg}

\floatname{algorithm}{Procedure}

\begin{document}

%
%
%
%
%
%
%
%
%

\title[Computing elements of certain form in ideals]{Computing elements of certain form in ideals to prove properties of operators}

\author[Clemens Hofstadler]{Clemens Hofstadler}

\address{%
Institute for Algebra\\
Johannes Kepler University Linz\\
Altenberger Stra\ss e 69\\
 4040 Linz, Austria}
\email{clemens.hofstadler@jku.at}

\thanks{Supported by the Austrian Science Fund (FWF):  P~31952 and P~32301}

\author[Clemens G.\ Raab]{Clemens G.\ Raab}
\address{%
Institute for Algebra\\
Johannes Kepler University Linz\\
Altenberger Stra\ss e 69\\
 4040 Linz, Austria}
\email{clemensr@algebra.uni-linz.ac.at}

\author[Georg Regensburger]{Georg Regensburger}
\address{%
Institute of Mathematics\\
University of Kassel\\
Heinrich-Plett-Straße 40\\
34132 Kassel, Germany}
\email{regensburger@mathematik.uni-kassel.de}

\subjclass{Primary 16Z10; Secondary 03B35}

\keywords{Noncommutative polynomials, noncommutative Gr\"obner bases, free algebra, ideal intersections, homogeneous part, monomial part, algebraic operator identities, automated proofs}

\date{\today}

\begin{abstract}

Proving statements about linear operators expressed in terms of identities often leads to finding elements of certain form in noncommutative polynomial ideals.
We illustrate this by examples coming from actual operator statements and discuss relevant algorithmic methods for finding such polynomials based on noncommutative Gr\"obner bases.  
In particular, we present algorithms for computing the intersection of a two-sided ideal with a one-sided ideal as well as for computing homogeneous polynomials in two-sided ideals and monomials in one-sided ideals. 
All methods presented in this work are implemented in the \textsc{Mathematica} package \texttt{OperatorGB}.
 
\end{abstract}

\maketitle
\tableofcontents

\section{Introduction}

The main motivation for the work presented here is to automatize proofs of statements about matrices and linear operators.
Many properties of matrices and linear operators can be expressed in terms of identities.
Algebraically, these identities can be modeled as free noncommutative polynomials whose indeterminates represent the basic operators involved.
From the polynomials corresponding to known or assumed identities of these operators a (two-sided) ideal is generated.
The most basic instance of a computational problem that can be used for proving a statement about linear operators consists in identifying a given polynomial as a member of a given ideal. 
More generally, proving a statement about linear operators translates into finding polynomials of a certain form in the given ideal.
In this paper, we discuss well-known and present novel algorithmic methods based on noncommutative Gr\"obner basis computations to do this and illustrate them by examples.

 Gr\"{o}bner bases of ideals arising from matrix identities have already been used in the pioneering work \cite{HeltonWavrik1994,HeltonStankusWavrik1998}.
 Recently, proving operator identities using Gr\"obner basis computations and related questions were also addressed in \cite{LevandovskyySchmitz2020}.
However, not every computation with noncommutative polynomials can be translated into a computation with operators, since addition and composition of operators are restricted by their domains and codomains.
In \cite{RaabRegensburgerHosseinPoor2021}, a general algebraic framework was developed that provides conditions on the generators of the ideal and the given polynomial to ensure that ideal membership of that polynomial is equivalent to the existence of a valid computation with actual operators proving the corresponding identity. 
More precisely, in this way a single computation with polynomials verifying ideal membership proves the statement about operators in an abstract setting in such a way that analogous statements in various concrete settings (e.g.\ for matrices, linear bounded operators, $C^*$-algebras, \ldots) follow immediately in a rigorous way.
To encode domains and codomains of operators, the framework uses a quiver, where vertices correspond to spaces, edges correspond to basic operators mapping between those spaces, and the labels of edges are indeterminates in a noncommutative polynomial ring. The central notion is then given by polynomials that are \emph{compatible with the quiver}, i.e., only consist of monomials that are labels of paths between the same pair of vertices in the quiver.

Sometimes identities arising from the properties of operators contain undetermined operators.
One class of such properties are conditions on ranges and kernels, like the inclusion of ranges $\mathcal{R}(A)\subseteq \mathcal{R}(B)$ of operators $A,B$. 
For linear operators over a field, such a range inclusion can be translated into the existence of a factorization $A=BX$ for some operator $X$.
In our setting, if we want to prove such a range inclusion, we have to find an element in the ideal generated by the assumptions that represents an identity of the form $A=BX$.
More generally, solving operator equations explicitly leads to searching for elements of a particular form in a given ideal.
This problem also arises when we want to use properties like injectivity or surjectivity of some operators in proofs. For example, to apply injectivity of an operator $A$, we have to search for elements representing an identity of the form $AB=AC$ to conclude that $B=C$. 
Also other properties can be described by the cancellability of certain left or right factors.

To motivate our work and illustrate our methods, we look at two concrete examples that deal with the solvability of operator equations.
First, we consider the following characterization of the solvability of the operator equation $AXB=C$ in Hilbert spaces from Proposition~3.3 in \cite{Arias}.
In the following, $P^\dagger$ denotes the Moore-Penrose inverse of a bounded linear operator $P$ and $P^*$ the adjoint operator.

\begin{theorem}\label{thm example 1}
Let $A\colon \mathcal{H}_4 \to \mathcal{H}_2$, $B\colon \mathcal{H}_1 \to \mathcal{H}_3$, $C\colon \mathcal{H}_1 \to \mathcal{H}_2 $ be bounded linear operators on complex Hilbert spaces.
There exists a bounded linear operator $X\colon \mathcal{H}_3 \to \mathcal{H}_4$ such that $AXB = C$ if and only if $\mathcal{R}(C) \subseteq \mathcal{R}(A)$ and $\mathcal{R}((A^\dagger C)^*) \subseteq \mathcal{R}(B^*)$.
\end{theorem}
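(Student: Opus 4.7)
The plan is to treat both directions of the equivalence as the task of finding a polynomial of a prescribed form in a noncommutative ideal, in the spirit of the introduction and of the Douglas-type factorization: a range inclusion like $\mathcal{R}(P)\subseteq\mathcal{R}(Q)$ corresponds to an identity $P = QW$ with a suitable witness operator $W$.

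For the forward direction, I would assume $X$ with $AXB = C$. The inclusion $\mathcal{R}(C)\subseteq\mathcal{R}(A)$ is then witnessed directly by $Y := XB$, since $C = AY$. For the second inclusion, I would rewrite $A^\dagger C = A^\dagger A X B$, take adjoints to obtain $(A^\dagger C)^* = B^* X^* (A^\dagger A)^*$, and use the Moore-Penrose axiom $(A^\dagger A)^* = A^\dagger A$ to identify the witness $Z := X^* A^\dagger A$ satisfying $(A^\dagger C)^* = B^* Z$, so that $\mathcal{R}((A^\dagger C)^*)\subseteq \mathcal{R}(B^*)$ follows.

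For the reverse direction, I would encode the two assumed inclusions by introducing witnesses $U$ and $V$ with $C = AU$ and $(A^\dagger C)^* = B^* V$; taking adjoints in the second yields $A^\dagger C = V^* B$. Together with the four Moore-Penrose relations for $A$ and for $B$, these generate the ideal in which one must produce an element of the form $AXB - C$ for some noncommutative polynomial $X$. The natural candidate is $X := A^\dagger C B^\dagger$, and two reductions suffice: $A A^\dagger C = A A^\dagger A U = A U = C$ using $A A^\dagger A = A$; and $A^\dagger C B^\dagger B = V^* B B^\dagger B = V^* B = A^\dagger C$ using $B B^\dagger B = B$. Combining them yields $A X B = A A^\dagger C B^\dagger B = A A^\dagger C = C$, as required.

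The main obstacle in automating this argument, which is precisely the viewpoint this paper advocates, is the systematic introduction of the auxiliary indeterminates $U$ and $V$ encoding the range-inclusion hypotheses, followed by a search among noncommutative polynomials $X$ in the remaining generators for one that places $AXB - C$ in the ideal. The hand proof above effectively guesses the factored form $X = A^\dagger C B^\dagger$ in a single step, and the quiver-compatibility checks on domains and codomains (with $U\colon\mathcal{H}_1\to\mathcal{H}_4$ and $V\colon\mathcal{H}_4\to\mathcal{H}_3$) are immediate; a general-purpose algorithm, however, must rediscover both the witness variables and the factored candidate, and this is exactly the kind of structured ideal-membership problem that the methods developed later in the paper are designed to solve.
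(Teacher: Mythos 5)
Your forward direction is fine, and the witnesses $Y=XB$ and $Z=X^*A^\dagger A$ match what one would expect. The issue is in the reverse direction: your candidate $X=A^\dagger C B^\dagger$ invokes the Moore--Penrose inverse $B^\dagger$, which is neither assumed in the statement nor present in the paper's generating set $F$ (the four Moore--Penrose relations appear only for $A$, plus their adjoints). In the Hilbert-space setting $B^\dagger$ exists as a bounded operator only when $\mathcal{R}(B)$ is closed, and the theorem does not grant this, so the step $V^*BB^\dagger B = V^*B$ is not available. Concretely, the quiver of Figure~\ref{fig:quiver} has edges $b, b^*, z, z^*$ between vertices 3 and 4 and 2 and 3, but no edge that could play the role of $b^\dagger$.

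The paper's proof (via intersection of $(F)$ with the right ideal $(a,c)_\rho$) produces the polynomial $az^*b - c$, i.e.\ the candidate $X = Z^*$, where $Z$ is the witness of the second range inclusion. That choice avoids $B^\dagger$ entirely: from $A^\dagger C = Z^*B$ one gets $AZ^*B = AA^\dagger C$, and then $AA^\dagger C = AA^\dagger AY = AY = C$ using only $AA^\dagger A = A$ and $C = AY$. So the computation uses precisely the generators the paper allows, whereas yours relies on an extra operator whose existence is not guaranteed. If you replace $X = A^\dagger C B^\dagger$ with $X = Z^*$ (equivalently $V^*$ in your notation), the argument goes through exactly as in the paper and needs no hypothesis on $B$.
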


We discuss how to prove the sufficiency of the range inclusions in Theorem~\ref{thm example 1} by a computation with polynomials.
First, all properties appearing in the statement have to be phrased in terms of identities.
For our assumptions, the postulated range inclusions can be translated into the existence of operators $Y,Z$ such that $C = AY$ and $(A^\dagger C)^* = B^* Z$.
The existence of the Moore-Penrose inverse $A^\dagger$ is encoded in terms of identities by adding the four defining equations of $A^\dagger$ to our assumptions.
These identities, the two identities for the range inclusions, and the respective adjoint statements can be translated into a set $F$ of 10 noncommutative polynomials with integer coefficients in 12 indeterminates.
The quiver encoding the domains and codomains of the operators involved consists of 4 vertices and 12 edges.
It is depicted in Figure~\ref{fig:quiver}.
Then, finding a solution $X$ of $AXB = C$ corresponds to finding a compatible polynomial $axb - c \in (F)$, where $a,b,c$ are known but $x$ is an unknown polynomial, in the ideal generated by $F$.
In Section~\ref{sec ideal intersection}, we discuss how this can be done using ideal intersections.
In particular, we discuss well\nobreakdash-known ideal intersection techniques~\cite{Nor98,Mora2016} and we present a new algorithm to compute generators of the intersection of an ideal with a one\nobreakdash-sided ideal, which will allow us to solve this problem.
We also characterize when an ideal has a finite right generating set.

However, not all properties of operators ultimately lead to statements about polynomials that can be solved by ideal intersections. 
One example of such a property is determining whether a given operator is positive.
Recall that a linear operator $P$ is called positive if there exists another linear operator $Q$ such that $P$ factors as $P = Q^* Q$.
A concrete statement where one has to determine the positivity of a given operator is the following part of Theorem~4.3 in~\cite{Arias}.

\begin{theorem}\label{thm example 2}
Let $A\colon \mathcal{H}_3 \to \mathcal{H}_2$, $B\colon \mathcal{H}_1 \to \mathcal{H}_3$, $C\colon \mathcal{H}_1 \to \mathcal{H}_2 $ be bounded linear operators on complex Hilbert spaces such that $\mathcal{R}(B) \subseteq \mathcal{R}(A^*)$.
If there exists a bounded, positive linear operator $X\colon \mathcal{H}_3 \to \mathcal{H}_3$ such that $AXB = C$, then $B^* A^\dagger C$ is positive.
\end{theorem}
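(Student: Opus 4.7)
The plan is to exhibit a polynomial $h$ such that the difference $b^*a^\dagger c - h^*h$ lies in the ideal generated by the polynomial translation of the assumptions; in the framework of~\cite{RaabRegensburgerHosseinPoor2021}, such a witness proves that $B^*A^\dagger C$ is positive, since it yields the operator factorisation $B^*A^\dagger C = H^*H$ for $H$ the operator represented by $h$.

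First I would set up the ideal. The range inclusion $\mathcal{R}(B) \subseteq \mathcal{R}(A^*)$ translates into the existence of an operator $Z$ with $B = A^*Z$, and positivity of $X$ into $X = Y^*Y$ for some $Y$. Together with $AXB = C$, the four Moore--Penrose identities for $A^\dagger$, their adjoints, and the identities encoding the interaction of adjoints with products, these assumptions yield a finite set $F$ of noncommutative polynomials on a quiver whose vertices are the Hilbert spaces $\mathcal{H}_1,\mathcal{H}_2,\mathcal{H}_3$ and whose edges carry the indeterminates $a,b,c,x,y,z,a^\dagger$ and their adjoints.

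Next I would propose the factor $h = yb$ and verify that $b^*a^\dagger c - (yb)^*(yb) \in (F)$. The crucial intermediate identity is $b^*a^\dagger a = b^*$, obtained by combining $b^* = z^*a$ (the adjoint of $b = a^*z$) with the Moore--Penrose identity $aa^\dagger a = a$. Using this, one rewrites
\[
 b^*a^\dagger c \;=\; b^*a^\dagger a\,x\,b \;=\; b^*x b \;=\; b^*y^*y b \;=\; (yb)^*(yb)
\]
modulo $F$, where the successive equalities invoke $c = axb$, $b^*a^\dagger a = b^*$, and $x = y^*y$. Each step corresponds to a rewriting by a generator of $F$, so the overall difference is an explicit element of the ideal, and the framework of~\cite{RaabRegensburgerHosseinPoor2021} transports this identity back to a factorisation $B^*A^\dagger C = (YB)^*(YB)$ of honest operators.

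The main obstacle, from the perspective of this paper, is not the calculation above but how to \emph{discover} the factor $h = yb$ algorithmically. Unlike the range inclusions handled in Theorem~\ref{thm example 1}, positivity of $B^*A^\dagger C$ cannot be reduced to an intersection of the ideal with a one-sided ideal: one must search inside the ideal for an element congruent to $b^*a^\dagger c$ that admits a representation of the form $h^*h$. I expect this step to rely on the techniques alluded to in the abstract for extracting homogeneous components of polynomials in a two-sided ideal, combined with a symbolic ansatz for $h$ that is compatible with the quiver (so that $h$ encodes an operator with matching domain and codomain). Verifying such an ansatz via a noncommutative Gr\"obner basis computation would then complete the proof.
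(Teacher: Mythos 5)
Your proposal is correct and follows essentially the same route as the paper: you arrive at the same witness $h = yb$, the same factorisation $B^*A^\dagger C = (YB)^*(YB)$, and you correctly identify that the algorithmic discovery step uses the homogeneous-part machinery of Section~\ref{sec homogeneous part} (the paper introduces $v - b^*a^\dagger c$ with $\deg_A(v)=0$ and finds $v - (yb)^*yb$ in $\hom_A(I')$). Your explicit rewriting chain via $b^*a^\dagger a = b^*$ is a nice hand-verification of the ideal membership that the paper leaves to the Gr\"obner basis computation.
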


Proving this statement by a computation with polynomials ultimately leads to the following problem.
Given the ideal $(F)$ generated by the assumptions $F$ of the theorem, find a polynomial of the form $p - q^* q \in (F)$, where $p$ is known but the polynomial $q$, and therefore also $q^*$, are unknown.
Here, $q^*$ denotes the image of $q$ under the involutive antiautomorphism ${}^*$ that sends each variable $x_i$ to an adjoint variable $x_i^*$ and each $x_i^*$ back to $x_i$.
Ideal intersections are not useful here since the second term $q^* q$ is completely unknown.
To nevertheless solve this problem if $q$ is a noncommutative monomial, we have generalized a method to compute monomials in a commutative ideal~\cite{SST00, Mil16}.
In Section~\ref{sec homogeneous part}, we present this new algorithm, that allows to compute homogeneous polynomials in noncommutative ideals, and discuss how it can be used to prove Theorem~\ref{thm example 2}.
Additionally, in Section~\ref{sec monomial part} we show how a simple adaptation of this procedure allows to find monomials in one-sided ideals.

As a consequence of the undecidability of the word problem, also ideal membership in rings of noncommutative polynomials is undecidable.
This is, among other things, also reflected in the fact that not all finitely generated ideals in such rings have a finite Gr\"obner basis.
Nevertheless, an analog of Buchberger's algorithm can be used to enumerate a (possibly infinite) Gr\"{o}bner basis \cite{Mora1994}.
Since all the methods presented in this work rely on noncommutative Gr\"obner basis computations, they are in fact not terminating algorithms but rather enumeration procedures.
In practice, however, these tools still often allow to verify ideal membership for concrete polynomials.
More generally, for finding polynomials of a certain form in an ideal, we first specify a certain subideal (e.g.\ by ideal intersections), which has to contain all polynomials of the desired form.
Then, we enumerate generators of this subideal and search for a polynomial of the desired form.
Note that, in general, there is no guarantee that we can find such a polynomial among the generators of the subideal.
However, in practice, this was the case in almost all algebraic proofs of operator statements we considered so far.

All the procedures presented here are implemented in the \textsc{Mathematica} software package \texttt{OperatorGB}~\cite{Hof20},
which also provides support for proving properties of matrices and operators along these lines using the framework mentioned above.
The package is available at \url{https://clemenshofstadler.com/software/} along with a notebook and PDF file containing detailed proofs of the examples discussed in this paper.

Finally, we note that the methods presented in this work are not exhaustive.
There exist other procedures that also allow to compute elements of certain form in noncommutative polynomial ideals.
For example, certain problems can also be solved by computing the intersection of a (two-sided) ideal with a subalgebra of the ring of noncommutative polynomials \cite{Nor98}.
Furthermore, the problems that Theorem~\ref{thm example 1} and~\ref{thm example 2} lead to, could also be considered as factorization problems in a suitable quotient of the ring of noncommutative polynomials.
In~\cite{BHL17}, it was shown that this quotient is a finite factorisation domain if it admits a finite dimensional filtration and a terminating algorithm was given that allows to compute all such factorisations in the affirmative case.
Unfortunately, however, for the problems arising from statements about operators, we usually do not have access to such a finite dimensional filtration (in fact, most of the time we do not even know whether it exists) as this would
require a description of a complete Gr\"obner basis.

To keep the presentation self-contained, we briefly recall the most important notions about noncommutative polynomials and the algebraic proof framework from \cite{RaabRegensburgerHosseinPoor2021} in the following section.


\section{Preliminaries and notation}

We denote by $\<X>$ the free monoid over $X = \{x_1,\dots,x_n\}$ and we let $K\<X>$ be the free algebra generated by $X$ over a field $K$.
When speaking about an algebra over $K$, or a $K$-algebra for short, we always mean an (associative) $K$-algebra with unit element. 
We consider the elements in $K\<X>$ as \emph{noncommutative polynomials} with coefficients in $K$ and monomials in $\<X>$.
In $K\<X>$, indeterminates still commute with coefficients but not with each other.

For a monomial $m = x_{i_1}\dots x_{i_k} \in \<X>$, the quantity $k$ is called the \emph{length} of $m$ and denoted by $|m|$.
A  \emph{multiple} of $m$ is any monomial $m'$ of the form $m' = a m b$ with $a,b\in \<X>$.
If $a = 1$, then $m'$ is called a  \emph{right multiple} of $m$.
If $m'$ is a multiple of $m$, then $m$ is said to \emph{divide} $m'$.
Furthermore, $m$ is called  \emph{(right) reducible} modulo $M \subseteq \<X>$ if $m$ is a (right) multiple of some monomial in $M$, otherwise it is called \emph{(right) irreducible} modulo $M$.
We fix a monomial ordering $\preceq$ on $\<X>$, i.e., $\preceq$ is a well-ordering on $\<X>$ such that $m \preceq m'$ implies $amb \preceq am'b$ for all $a,b,m,m' \in \<X>$.

For a given set of polynomials $F \subseteq K\<X>$, we denote by $(F)$ and $(F)_\rho$ the (two-sided) ideal, respectively the right ideal, generated by $F$.
When working with one-sided ideals, we restrict ourselves to right ideals, since the situation for left ideals is completely symmetric and all theorems about right ideals also hold, mutatis mutandis, for left ideals.
In order to help with the distinction between two-sided ideals and right ideals, we denote the former by capital letters, e.g.\ $I,J,\dots$,
and the latter by capital letters with an additional subscript $\rho$, e.g.\ $I_\rho, J_\rho, \dots$.

Furthermore, we let $\lm(F) = \{\lm(f) \mid 0 \neq f \in F\}$, where $\lm(f)$ denotes the leading monomial of $f$ w.r.t.\ $\preceq$.
The set $F$ is called  \emph{(right) reduced} if every element in $F$ is monic and if, for all $f \in F$, all monomials appearing in $f$ are (right) irreducible modulo $\lm(F \setminus \{f\})$.
We note that a finite set $F$ can always be transformed into a finite (right) reduced set $F'$ such that $(F) = (F')$ (resp.~s.t.~$(F)_\rho = (F')_\rho$).
See for example \cite{Xiu12} on how this can be done.

Given an ideal $I \subseteq K\<X>$, we say that $G \subseteq K\<X>$ is a  \emph{(two-sided) Gr\"obner basis} of $I$ if $(G) = I$ and $(\lm(G)) = (\lm(I))$.
For a right ideal $I_\rho \subseteq K\<X>$, a right generating set $G \subseteq K\<X>$ is called a  \emph{right Gr\"obner basis} of $I_\rho$ if $(\lm(G))_\rho = (\lm(I_\rho))_\rho$.
If $G$ is (right) reduced, it is called the  \emph{reduced (right) Gr\"obner basis}.
We note that the reduced (right) Gr\"obner basis of an ideal (resp.\ a right ideal) is unique.
Furthermore, the reduced right Gr\"obner basis of a right ideal $I_\rho$ can be computed by right reducing any generating set of $I_\rho$.
The following lemma expresses this fact and implies that every finitely generated right ideal has a finite Gr\"obner basis; no matter the chosen monomial ordering.

\begin{lemma}\label{lemma right gb}
Let $I_\rho \subseteq K\<X>$ be a right ideal with right generating set $G \subseteq K\<X>$.
If $G$ is right reduced, then $G$ is the reduced right Gr\"obner basis of $I_\rho$.
\end{lemma}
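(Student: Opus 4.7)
The plan is to reduce to showing $(\lm(G))_\rho = (\lm(I_\rho))_\rho$, i.e., that $G$ is a right Gr\"obner basis, since reducedness is part of the hypothesis and the reduced right Gr\"obner basis is unique, as already noted in the excerpt. The containment $(\lm(G))_\rho \subseteq (\lm(I_\rho))_\rho$ is immediate from $G \subseteq I_\rho$, so the content lies in the reverse inclusion: for every nonzero $f \in I_\rho$, I must show $\lm(f) \in (\lm(G))_\rho$.

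Given such an $f$, I would use that $G$ generates $I_\rho$ as a right ideal to write $f = \sum_{k} c_k\, g_{i_k} w_k$ with $c_k \in K$ nonzero, $g_{i_k} \in G$, $w_k \in \<X>$, and, after combining like terms, with the pairs $(g_{i_k}, w_k)$ pairwise distinct. Setting $M = \max_k \lm(g_{i_k})\, w_k$, the monomial $M$ is visibly a right multiple of some $\lm(g_{i_k})$, and $\lm(f) \preceq M$. The whole task is then to show that $M$ does not cancel out, i.e., that $\lm(f) = M$.

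The main technical step is to analyse how $M$ can occur in the expansion $\sum_k c_k\, g_{i_k} w_k$. First, a tail monomial $m$ of some $g_{i_{k'}}$ cannot satisfy $m w_{k'} = M$: maximality gives $\lm(g_{i_{k'}}) w_{k'} \preceq m w_{k'}$, and since $\preceq$ is a monomial ordering, right-cancelling $w_{k'}$ forces $\lm(g_{i_{k'}}) \preceq m$, contradicting $m \prec \lm(g_{i_{k'}})$. Second, two distinct indices $k \neq k'$ cannot both satisfy $\lm(g_{i_k}) w_k = \lm(g_{i_{k'}}) w_{k'} = M$: comparing these as words in the free monoid, one leading monomial, say $\lm(g_{i_k})$, is a prefix of the other, giving $\lm(g_{i_{k'}}) = \lm(g_{i_k})\, u$ for some $u \in \<X>$. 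Right-reducedness of $G$ then forces $g_{i_k} = g_{i_{k'}}$, for otherwise $\lm(g_{i_{k'}})$ would be a right multiple of $\lm(g_{i_k}) \in \lm(G \setminus \{g_{i_{k'}}\})$; after this, free-monoid cancellation gives $u = 1$ and $w_k = w_{k'}$, contradicting distinctness of the pairs. Hence exactly one summand contributes to the monomial $M$, with coefficient $c_k \neq 0$, so $\lm(f) = M \in (\lm(G))_\rho$.

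The main obstacle I anticipate is the bookkeeping of the prefix structure in $\<X>$ when invoking right-reducedness; once that is handled, the absence of S-polynomial obstructions in the one-sided setting makes the rest routine, and uniqueness of the reduced right Gr\"obner basis then identifies $G$ with it.
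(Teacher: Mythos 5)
Your proof is correct and self-contained. The paper does not give its own argument for this lemma but instead cites Proposition~4.4.4 of Xiu's thesis, so there is no in-paper proof to compare against. The route you take is the standard direct argument: expand $f$ as $\sum_k c_k\, g_{i_k} w_k$ with the pairs $(g_{i_k}, w_k)$ pairwise distinct, let $M$ be the maximal monomial $\lm(g_{i_k})\, w_k$ over the representation, and show $M$ cannot cancel. Your two ingredients are both right: tail monomials of any $g_{i_k}$ land strictly below $M$ after right-multiplication by $w_k$, and equidivisibility in the free monoid together with right-reducedness of $G$ forces the index attaining $M$ to be unique (a shorter leading monomial being a prefix of a longer one would make the latter right-reducible modulo the rest of $\lm(G)$). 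Hence $\lm(f) = M \in (\lm(G))_\rho$, and uniqueness of the reduced right Gr\"obner basis finishes the proof. This is exactly the feature of the one-sided case — no overlap obstructions, only prefix divisibility — that makes right-reducedness alone sufficient, and your bookkeeping with the prefix structure handles it correctly.
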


\begin{proof}
See~\cite[Proposition 4.4.4]{Xiu12}.
\end{proof}

All the techniques presented in this work rely on the elimination property of noncommutative Gr\"obner bases \cite{BB98}.
If we let $I \subseteq K\<X>$ be an ideal and $Y \subseteq X$, then, analogous to the commutative case,
the elimination property of noncommutative Gr\"obner bases allows to easily obtain a Gr\"obner basis of the elimination ideal $I \cap K\<X \setminus Y>$  
from a suitable Gr\"obner basis of $I$.
A Gr\"obner basis is suitable to do this, if it is computed w.r.t.\ a special monomial ordering, called an \emph{elimination ordering}.
Such an ordering is defined as follows.
A monomial ordering is called an \emph{elimination ordering} for $Y \subseteq X$ if $\lm(f) \in \< X \setminus Y>$ implies $f \in K \< X \setminus Y>$ for all $f \in K\<X>$.
The following theorem states the elimination property of noncommutative Gr\"obner bases.

\begin{theorem}\label{thm elim}
Let $I \subseteq K\<X>$ be an ideal and $Y \subseteq X$. 
Furthermore, let $G$ be a Gr\"obner basis of $I$ computed w.r.t.\ an elimination ordering for $Y$.
Then, $G \cap  K\<X \setminus Y>$ is a Gr\"obner basis of the elimination ideal $I \cap K\<X \setminus Y>$.
\end{theorem}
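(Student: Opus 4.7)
The plan is to verify separately the two defining properties of a (two-sided) Gr\"obner basis for the set $G' := G \cap K\<X\setminus Y>$ with respect to the ideal $J := I \cap K\<X\setminus Y>$, viewed inside the smaller free algebra $K\<X\setminus Y>$. Since $G' \subseteq G \subseteq I$ and by construction $G' \subseteq K\<X\setminus Y>$, we have $G' \subseteq J$ for free; what remains is to show (i) that $(G') = J$ as ideals in $K\<X\setminus Y>$, and (ii) that $(\lm(G')) = (\lm(J))$ as monomial ideals in $K\<X\setminus Y>$.

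The key observation driving both claims is the following: for every nonzero $f \in J$, there exists $g \in G'$ with $\lm(g)$ dividing $\lm(f)$. Indeed, since $f \in I$ and $G$ is a Gr\"obner basis of $I$, some $g \in G$ satisfies $\lm(g) \mid \lm(f)$. Because $f \in K\<X\setminus Y>$, the monomial $\lm(f)$ lies in $\<X\setminus Y>$, so every divisor of it — in particular $\lm(g)$ — also lies in $\<X\setminus Y>$. The defining property of an elimination ordering for $Y$ then forces $g \in K\<X\setminus Y>$, hence $g \in G'$. Claim (ii) is an immediate consequence.

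For (i), I would run the standard Gr\"obner-style top reduction. Given $0 \neq f \in J$, choose $g \in G'$ as above and write $\lm(f) = a\,\lm(g)\,b$ with $a,b \in \<X>$; since $\lm(f) \in \<X\setminus Y>$, the subwords $a$ and $b$ automatically lie in $\<X\setminus Y>$. Setting
\[
f' \;:=\; f \;-\; \frac{\lc(f)}{\lc(g)}\, a\,g\,b,
\]
we obtain $f' \in I$ with $\lm(f') \prec \lm(f)$; and since $a, g, b$ all belong to $K\<X\setminus Y>$, also $f' \in K\<X\setminus Y>$, so $f' \in J$. Iterating and invoking well-foundedness of $\preceq$ produces, after finitely many steps, an expression of $f$ as a two-sided combination of elements of $G'$ with coefficients in $K\<X\setminus Y>$, proving (i).

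The only subtle point — really the crux of the proof — is the implication \emph{``$\lm(f) \in \<X\setminus Y> \Rightarrow g \in K\<X\setminus Y>$''}, which is precisely the reason elimination orderings are defined the way they are; once this is invoked, the fact that the reduction multipliers $a,b$ stay inside $\<X\setminus Y>$ is automatic because they are subwords of $\lm(f)$. Everything else is a straightforward adaptation of the familiar commutative elimination argument to the noncommutative setting, with the usual caveat that multiplications must be recorded on both sides.
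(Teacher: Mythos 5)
Your proof is correct and complete. The paper itself does not present a proof of this theorem; it simply cites Borges and Borges~\cite{BB98}, so there is no in-paper argument to compare against. Your argument is the standard noncommutative adaptation of the commutative elimination-theorem proof, and every step is sound: the crux that $\lm(f)\in\<X\setminus Y>$ forces any $g\in G$ with $\lm(g)\mid\lm(f)$ to lie in $G\cap K\<X\setminus Y>$ (via the elimination-ordering property applied after noting that subwords of a word over $X\setminus Y$ stay over $X\setminus Y$) is exactly right, and it delivers the leading-monomial condition $(\lm(G'))=(\lm(J))$ immediately. The top-reduction argument for $(G')=J$ is also carried out carefully: the multipliers $a,b$ are subwords of $\lm(f)$ and hence stay in $\<X\setminus Y>$, so each reduction step remains inside $K\<X\setminus Y>$, and well-foundedness of $\preceq$ terminates the process. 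One small observation worth making explicit: your argument only uses that $(\lm(G))=(\lm(I))$, i.e.\ that every $\lm(f)$ for nonzero $f\in I$ has a divisor in $\lm(G)$; this holds regardless of whether $G$ is finite, so the proof covers the (common in the noncommutative setting) case of infinite Gr\"obner bases as well.
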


\begin{proof}
See~\cite[Theorem~3.2]{BB98}. 
\end{proof}

We note that Theorem~\ref{thm elim} also applies in the same way to right ideals. 
For further information on Gr\"obner bases in the free algebra, see for example~\cite{Mora1994,Xiu12,Mora2016,Hof20}, and also~\cite{letterplace} for an overview on available software and further references.

To end this section, we recall some of the notions related to the framework developed in \cite{RaabRegensburgerHosseinPoor2021} that are also relevant for the work presented here.
For a short, more formal summary see also the appendix in~\cite{Hartwig}.
In order to translate identities of linear operators into noncommutative polynomials, each basic operator is replaced uniformly by a unique noncommutative indeterminate from some set $X$ and the difference of the left and right hand side of each identity is formed.
In this process, all information about the different domains and codomains of the operators is lost.
In order to preserve this information, it is encoded in a \emph{labelled quiver} $Q$.
This is a directed multigraph where edges are labelled by the indeterminates in $X$.
More precisely, the domains and codomains of the basic operators correspond to the vertices of $Q$ and an edge with label $x \in X$ is drawn from a vertex $v$ to a vertex $w$ if and only if the operator that has been replaced by
$x$ maps from the space represented by $v$ to the space represented by $w$. 
With this, a concatenation of paths in $Q$ corresponds to a multiplication of monomials in $\<X>$.
Then, a polynomial in $K\<X>$ is called \emph{compatible} with $Q$ if all its monomials correspond to paths in $Q$ which all have the same start and end.
Informally speaking, a polynomial is compatible with $Q$ if and only if it can be translated into a valid operator in the setting encoded by $Q$.
The main result of \cite{RaabRegensburgerHosseinPoor2021} then says the following. If the polynomials $F \subseteq K\<X>$ representing the assumptions of a statement about operators
as well as the polynomial $f \in K\<X>$ corresponding to the claimed property are all compatible with the quiver $Q$ encoding the domains and codomains,
then the ideal membership $f \in (F)$ is equivalent to the correctness of the statement about operators.


\section{Ideal intersections}\label{sec ideal intersection}

In the introduction, we have already indicated that ideal intersections provide a useful tool for finding elements of certain forms in an ideal $I \subseteq K\<X>$.
For example, they allow us to systematically search for elements of the form $l a r \in I$, where $a \in K\<X>$ is known but $l,r \in K\<X>$ are unknown.
This can be done by intersecting $I$ with the ideal $(a)$.
Moreover, we cannot only find polynomials where only a certain factor in the middle is known but also elements where, for example, only a specific prefix is given.
More precisely, intersecting the ideal $I$ with the right ideal $(a)_\rho$ allows to compute elements of the form $a r \in I$.

To support the proof of Theorem~\ref{thm example 1} by computing ideal intersections, we can use the following approach.
To find a suitable polynomial $axb - c \in (F)$ in the ideal generated by the assumptions $F$ of the statement, 
we intersect $(F)$ with the right ideal $(a,c)_\rho$ and enumerate generators of this intersection.
Among these generators, we then search for a polynomial of the form $a r - c$ with the additional requirement that the free part $r$ has to end with the variable $b$.
If we can find such a polynomial and show that it is compatible with the quiver $Q$ encoding the domains and codomains of the statement, then this proves Theorem~\ref{thm example 1}.
This approach is carried out successfully in the example at the end of this section.

In this section, we discuss how generating sets of several different kinds of ideal intersections can be computed. 
Recall that $I,J,\dots$ denote two\nobreakdash-sided ideals while $I_\rho, J_\rho,\dots$ denote right ideals.

In the case of commutative polynomials, two ideals $I,J \subseteq K[X]$ can be intersected by computing the elimination ideal $tI + (1-t)J \cap K[X]$, where $t$ denotes a new tag variable.
It is well-known that the same also works for intersecting one-sided ideals in the free algebra (see e.g.~\cite[Remark~48.8.6]{Mora2016}).

\begin{theorem}\label{thm intersect right ideals}
Let $I_\rho = (f_1,\dots,f_r)_\rho$, $J_\rho = (g_1,\dots,g_s)_\rho \subseteq K\<X>$ be two finitely generated right ideals.
Consider the right ideal 
\begin{align*}
	H_\rho = \left(t f_i, (1-t) g_j \mid 1 \leq i \leq r, 1 \leq j \leq s\right)_\rho.
\end{align*}
Then, $I_\rho \cap J_\rho = H_\rho \cap K\<X>$.
\end{theorem}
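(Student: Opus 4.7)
The plan is to establish the two inclusions $I_\rho \cap J_\rho \subseteq H_\rho \cap K\langle X\rangle$ and $H_\rho \cap K\langle X\rangle \subseteq I_\rho \cap J_\rho$ separately, mirroring the classical commutative trick of introducing a tag variable but taking care of the noncommutativity of the free algebra $K\langle X \cup \{t\}\rangle$ in which $H_\rho$ lives.

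For the forward inclusion, I would take $h \in I_\rho \cap J_\rho$, pick representations $h = \sum_i f_i p_i = \sum_j g_j q_j$ with $p_i, q_j \in K\langle X\rangle$, and decompose $h = th + (1-t)h$ inside $K\langle X \cup \{t\}\rangle$. By associativity, the two summands rewrite as $\sum_i (tf_i)p_i$ and $\sum_j ((1-t)g_j)q_j$, exhibiting $h$ as a right combination of the generators of $H_\rho$. Combined with the hypothesis $h \in K\langle X\rangle$, this gives $h \in H_\rho \cap K\langle X\rangle$.

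For the reverse inclusion, the strategy is to specialise $t$. Given $h \in H_\rho \cap K\langle X\rangle$, I would start from a representation $h = \sum_i (tf_i)P_i + \sum_j ((1-t)g_j)Q_j$ with $P_i, Q_j \in K\langle X \cup \{t\}\rangle$. The key tool is the pair of $K$-algebra homomorphisms $\varphi_0, \varphi_1 \colon K\langle X \cup \{t\}\rangle \to K\langle X\rangle$ sending $t$ to $0$ and $1$ respectively while fixing every element of $X$. Since $h$ contains no $t$, it is left unchanged by both maps. Applying $\varphi_0$ kills the first sum and yields $h = \sum_j g_j \varphi_0(Q_j) \in J_\rho$; applying $\varphi_1$ analogously kills the second sum and yields $h \in I_\rho$, so that $h \in I_\rho \cap J_\rho$.

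The step I would flag as the main point to get right---though it is not deep---is the justification that $\varphi_0$ and $\varphi_1$ really are well\nobreakdash-defined algebra homomorphisms in the noncommutative setting, where one cannot freely move $t$ past the $x_i$. This works because $0$ and $1$ lie in the centre $K$, so the universal property of $K\langle X \cup \{t\}\rangle$ applies and the prescribed images commute with everything as required. Once this is secured, both inclusions reduce to routine manipulation of right\nobreakdash-ideal representations.
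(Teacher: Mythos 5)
Your proof is correct. The paper does not give its own proof of this theorem but cites it as well-known (Remark 48.8.6 in Mora's book), and your argument—the decomposition $h = th + (1-t)h$ for one inclusion and the specialisations $t\mapsto 0$, $t\mapsto 1$ for the other, with the free-algebra universal property guaranteeing that these specialisations are $K$-algebra homomorphisms—is exactly the standard proof the cited reference provides.
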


Consequently, making use of the elimination property of noncommutative Gr\"obner bases, 
a right Gr\"obner basis of $I_\rho \cap J_\rho$ can be computed by removing all polynomial involving the variable $t$ from a right Gr\"obner basis of $H_\rho$ that is computed w.r.t.\ an elimination ordering for $\{t\}$.
Lemma~\ref{lemma right gb} ensures that the Gr\"obner basis computed in this way is always finite.

In~\cite{Nor98}, it was shown that the elimination property of noncommutative Gr\"obner bases also allows to compute the intersection of two-sided ideals provided that we introduce the commutator relations
$tx_k - x_kt$, for all $k = 1,\dots, n$, between the new tag variable $t$ and the elements of $X$. 

\begin{theorem}
Let $I = (f_1,\dots,f_r)$, $J = (g_1,\dots,g_s) \subseteq K\<X>$ be two finitely generated ideals.
Consider the ideal 
\begin{align*}
	H = \left(t f_i, (1-t) g_j, tx_k - x_k t \mid 1 \leq i \leq r, 1 \leq j \leq s, 1 \leq k \leq n\right).
\end{align*}
Then, $I \cap J = H \cap K\<X>$.
\end{theorem}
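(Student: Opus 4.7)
The plan is to establish both inclusions of the claimed equality $I \cap J = H \cap K\<X>$, where $H$ is viewed as an ideal in the larger free algebra $K\<X,t>$.

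For the inclusion $I \cap J \subseteq H \cap K\<X>$, take $h \in I \cap J$. Since $h$ is already in $K\<X>$, it suffices to prove $h \in H$. The trick is to split $h = th + (1-t)h$ inside $K\<X,t>$ and show each summand lies in $H$. Using a representation $h = \sum_i a_i f_i b_i$ coming from $h \in I$, I would write $th = \sum_i t a_i f_i b_i$ and then invoke the commutator generators $tx_k - x_k t$ of $H$ to move $t$ past each letter of $a_i$, yielding $th \equiv \sum_i a_i (tf_i) b_i \pmod{H}$, which lies in $H$ since each $tf_i$ is a generator. The symmetric argument, using $h = \sum_j c_j g_j d_j$ from $h \in J$, gives $(1-t)h \in H$. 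Adding the two gives $h \in H$.

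For the reverse inclusion $H \cap K\<X> \subseteq I \cap J$, I would use two evaluation homomorphisms $\phi_0, \phi_1 \colon K\<X,t> \to K\<X>$ that fix each $x_i$ and send $t$ to $0$ and $1$ respectively. These are well-defined $K$-algebra homomorphisms because $K\<X,t>$ is freely generated by $X \cup \{t\}$ as a $K$-algebra. If $h \in H \cap K\<X>$, then $\phi_0(h) = \phi_1(h) = h$ since $h$ is $t$-free. Applying $\phi_1$ to a representation of $h$ as an element of $H$ kills all the generators of the form $(1-t)g_j$ and $tx_k - x_k t$, while sending each $tf_i$ to $f_i$; the result is an expression showing $h \in I$. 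Analogously, $\phi_0$ yields $h \in J$, so $h \in I \cap J$.

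The main (really only) subtlety lies in the first inclusion: in contrast to the commutative setting and to Theorem~\ref{thm intersect right ideals} for right ideals (where the tag variable naturally stays on the left of each generator), here $t$ does not automatically commute with the $x_i$ in $K\<X,t>$. Consequently, the step of pushing $t$ next to $f_i$ in the expression $\sum_i t a_i f_i b_i$ is only justified thanks to the extra commutator generators $tx_k - x_k t$; without them the inclusion $I \cap J \subseteq H$ would genuinely fail. Once this commutation is in hand, the remaining verifications are routine bookkeeping with the two evaluation maps.
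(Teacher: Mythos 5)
Your proof is correct, and it is worth noting that the paper itself does not prove this theorem but simply cites Theorem~2 of Nordbeck's paper; your argument is a clean, self-contained reconstruction of what is almost certainly the intended proof. The two directions are handled by exactly the right tools: the splitting $h = th + (1-t)h$ together with the commutator generators $tx_k - x_kt$ (so that $tm \equiv mt \pmod H$ for every $m \in \<X>$, by induction on length, and hence also $(1-t)a \equiv a(1-t) \pmod H$) gives $I \cap J \subseteq H$, and the two evaluation homomorphisms $\phi_0, \phi_1$ collapse a representation of $h \in H \cap K\<X>$ onto $J$ and $I$ respectively. You also correctly identify the one genuine subtlety compared to the right-ideal version (Theorem~\ref{thm intersect right ideals}): for a right ideal the tag $t$ can simply stay on the left of each generator, whereas for two-sided ideals the left cofactors $a_i$ force you to transport $t$ across arbitrary words, which is exactly what the commutator generators are there for. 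No gaps.
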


\begin{proof}
See~\cite[Theorem~2]{Nor98}.
\end{proof}

Hence, as for right ideals, a Gr\"obner basis of the intersection can be computed by computing a Gr\"obner basis of $H$ w.r.t.\ an elimination ordering for $\{t\}$
and by intersecting this Gr\"obner basis with $K\<X>$.
However, in contrast to the previous case, we cannot expect this Gr\"obner basis to be finite in general.

Next, we consider the third and for our application most relevant type of ideal intersection  -- the intersection of a two-sided ideal $I \subseteq K\<X>$ with a right ideal $J_\rho \subseteq K\<X>$.
The basic idea to compute the intersection $I \cap J_\rho$ is to consider $I$ as a right ideal and to intersect two right ideals.
As Theorem~\ref{thm intersect right ideals} provides an algorithmic way to compute the intersection of two right ideals, it remains to discuss how to obtain a right generating set of an ideal $I$.
The following proposition is a spezialisation of a result in~\cite{Gre00} for path algebras and tells us how the reduced two-sided Gr\"obner basis of an ideal $I$ can be used to obtain a right Gr\"obner basis of $I$.
In particular, this provides a way to obtain a right generating set of a two-sided ideal.
We recall that a monomial $p \in \<X>$ is called a \emph{proper prefix} of a monomial $m \in \<X>$ if $m = p q$ with $1 \neq q \in \<X>$.

\begin{proposition}\label{prop right GB}
Let $I \subseteq K\<X>$ be an ideal and let $G$ be its reduced Gr\"obner basis.
Then, the set
\begin{align*}
	\rho(I) \coloneqq \{&mg \mid m\in \<X>, g\in G,\\
		&p \text{ is irreducible modulo }\lm(G)\text{ for any proper prefix }p\text{ of }\lm(mg)\}
\end{align*}
is a right Gr\"obner basis of $I$.
Furthermore, $\lm(f)$ is right irreducible modulo $\lm(\rho(I) \setminus \{f\})$ for all $f \in \rho(I)$.
\end{proposition}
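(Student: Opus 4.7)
The plan is to verify the two assertions separately: first, that $(\rho(I))_\rho = I$ and $(\lm(\rho(I)))_\rho = (\lm(I))_\rho$, which together say that $\rho(I)$ is a right Gr\"obner basis of $I$; and second, the right-irreducibility of each $\lm(f)$ modulo $\lm(\rho(I)\setminus\{f\})$. The inclusion $\rho(I)\subseteq I$ is immediate because $G\subseteq I$ and $I$ is two-sided, so each $mg$ with $g\in G$ and $m\in\<X>$ lies in $I$.

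The crux for the first assertion is to show that for every nonzero $f\in I$ the monomial $\lm(f)$ is a right multiple of $\lm(h)$ for some $h\in\rho(I)$. Since $G$ is a two-sided Gr\"obner basis, $\lm(f)$ is reducible modulo $\lm(G)$. I would let $p$ be the shortest prefix of $\lm(f)$ that is reducible modulo $\lm(G)$ and write $p=a\lm(g)b$ for some $g\in G$ and $a,b\in\<X>$. If $b\neq 1$, then $a\lm(g)$ would be a strictly shorter prefix of $\lm(f)$ that is still reducible, contradicting minimality; hence $b=1$ and $p=\lm(ag)$. By minimality of $p$, every proper prefix of $p$ is irreducible modulo $\lm(G)$, so $ag\in\rho(I)$ and $\lm(f)$ is a right multiple of $\lm(ag)$. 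From this, $(\lm(\rho(I)))_\rho=(\lm(I))_\rho$ follows, and then $(\rho(I))_\rho=I$ is obtained by the standard top-reduction argument: cancel the leading term of $f$ by subtracting an appropriate scalar right multiple of such an $ag$, and proceed by Noetherian induction on $\preceq$.

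For the right-irreducibility claim I would argue by contradiction. Suppose $f=mg\in\rho(I)$ admits $\lm(f)=\lm(f')\cdot w$ for some $f'=m'g'\in\rho(I)\setminus\{f\}$ and $w\in\<X>$. If $w\neq 1$, then $\lm(f')=m'\lm(g')$ is a proper prefix of $\lm(f)$ that is reducible modulo $\lm(G)$, directly contradicting the defining property of $\rho(I)$ for $f$. If $w=1$, then $m\lm(g)=m'\lm(g')$; comparing the lengths of $\lm(g)$ and $\lm(g')$ shows that one of these two leading monomials is a two-sided multiple of the other, and the reducedness of $G$ then forces $g=g'$, whence $m=m'$ and $f=f'$, a contradiction.

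The main obstacle I anticipate is the bookkeeping in the second part when $w=1$: one has to translate the equality of leading monomials into a two-sided divisibility statement between $\lm(g)$ and $\lm(g')$ and invoke reducedness of $G$ correctly. Elsewhere, the argument is driven cleanly by the minimality of the chosen prefix and by the observation that any reducibility of a proper prefix of $\lm(mg)$ is, by construction, forbidden once $mg$ has been accepted into $\rho(I)$.
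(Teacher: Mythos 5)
Your argument is correct, but it takes a genuinely different route from the paper: the paper gives no direct proof and instead invokes \cite[Proposition~7.1]{Gre00} after observing that $K\<X>$ is a path algebra with a single vertex, so the verification is delegated entirely to Green's framework of right Gr\"obner bases for path algebras. You instead give a self-contained proof inside $K\<X>$. For the Gr\"obner basis property, your key device is the \emph{shortest reducible prefix} $p$ of $\lm(f)$: minimality forces $p=a\lm(g)$ with $b=1$, and also forces every proper prefix of $p$ to be irreducible, so $ag\in\rho(I)$ supplies the required right divisor; the standard top-reduction plus Noetherian induction then closes the generating-set claim. For the right-irreducibility assertion, the case $w\neq 1$ falls immediately to the defining prefix condition, and in the case $w=1$ the length comparison together with reducedness of $G$ gives $g=g'$ and then $m=m'$. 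Both routes are sound; the citation is shorter and embeds the statement in a more general theory, while your argument is elementary, transparent, and requires no path-algebra machinery. The one step worth writing out in full is the $w=1$ cancellation: from $m\lm(g)=m'\lm(g')$ with, say, $|\lm(g)|\leq|\lm(g')|$, one gets that $\lm(g)$ is a suffix of $\lm(g')$, hence $\lm(g')=u\lm(g)$ is a (two-sided) multiple of $\lm(g)$, so reducedness of $G$ forces $g=g'$, after which cancellation in the free monoid gives $m=m'$ and $f=f'$, the desired contradiction.
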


\begin{proof}
Follows from~\cite[Proposition~7.1]{Gre00} since $K\<X>$ can be viewed as a path algebra with only one vertex.
\end{proof}

\begin{remark}
Equivalently, we can also write $\rho(I)$ as 
 \begin{align}\label{equation right gb}
 \begin{aligned}
	\rho(I) = \{&mg \mid m\in B_I, g\in G, \\ 
		&p \in B_I \text{ for any proper prefix }p\text{ of } \lm(mg)\},
\end{aligned}
\end{align}
where $G$ is the reduced Gr\"obner basis of $I$ and $B_I = \<X> \setminus \lm(I)$ is the set of all monomials which are irreducible modulo $\lm(I)$.
Note that the equivalence classes of the monomials in $B_I$ form a $K$-vector space basis of $K\<X>/I$.
\end{remark}

It follows from the uniqueness of the reduced Gr\"obner basis, that, for a fixed monomial ordering, the set $\rho(I)$ only depends on $I$.
In the following, we discuss under which conditions a two-sided ideal $I$ is finitely generated as a right ideal.
The following result tells us that this is the case if and only if the set $\rho(I)$ is finite.

\begin{proposition}\label{prop equiv finite right GB}
Let $I \subseteq K\<X>$ be an ideal. 
Then, the following are equivalent.
\begin{enumerate}
	\item $I$ is finitely generated as a right ideal.
	\item $I$ has a finite right Gr\"obner basis.
	\item The set $\rho(I)$ is finite.
\end{enumerate}
\end{proposition}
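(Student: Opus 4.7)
The directions $(3) \Rightarrow (2) \Rightarrow (1)$ are essentially free: Proposition~\ref{prop right GB} already asserts that $\rho(I)$ is a right Gröbner basis of $I$, so its finiteness gives~(2); and any right Gröbner basis is, in particular, a right generating set, giving~(1). For $(1) \Rightarrow (2)$, my plan is to take a finite right generating set of $I$, right-reduce it by the standard procedure mentioned in Section~2, and conclude via Lemma~\ref{lemma right gb} that the resulting finite right-reduced set is the reduced right Gröbner basis of $I$.

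The substantive step is $(2) \Rightarrow (3)$. The plan is to show that $|\rho(I)| = |H|$, where $H$ is the reduced right Gröbner basis of $I$ (which is finite under~(2) by the same right-reduction argument). The key \emph{sub-claim} is that for \emph{any} right Gröbner basis $G$ of $I$ whose leading monomials form an antichain under the prefix order, $\lm(G)$ coincides with the set of minimal elements of $\lm(I)$ under the prefix order, and distinct elements of $G$ carry distinct leading monomials. Both $H$ (being right-reduced) and $\rho(I)$ (by the second assertion of Proposition~\ref{prop right GB}) satisfy this antichain hypothesis, so applying the sub-claim to both yields $\lm(\rho(I)) = \lm(H)$ and hence $|\rho(I)| = |H| < \infty$.

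The main obstacle is proving the sub-claim, which boils down to two short verifications. To show every $\lm(g)$ is minimal in $\lm(I)$, I would suppose some proper prefix $p$ of $\lm(g)$ lies in $\lm(I)$; since $G$ is a right Gröbner basis, $p = \lm(g') \cdot w$ for some $g' \in G$ and $w \in \<X>$, making $\lm(g)$ a proper right multiple of $\lm(g')$. The case $g' = g$ is excluded by a trivial length argument, and $g' \neq g$ contradicts the antichain property. Conversely, any minimal element $\mu \in \lm(I)$ can be written as $\lm(g) \cdot w$, and minimality of $\mu$ forces $w = 1$, so $\mu = \lm(g) \in \lm(G)$. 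Injectivity of $\lm$ on $G$ is immediate from the antichain property, since two equal leading monomials would be trivial right multiples of each other. I do not expect any step beyond these small observations to be needed.
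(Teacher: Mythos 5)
Your proof is correct, but the substantive step $(2)\Rightarrow(3)$ takes a genuinely different route from the paper's. The paper argues by contradiction: assuming $\rho(I)$ is infinite while $I$ has a finite right Gr\"obner basis $G=\{g_1,\dots,g_m\}$, it picks for each $g_i$ an element $g_i'\in\rho(I)$ whose leading monomial right-divides $\lm(g_i)$, then takes any $g\in\rho(I)\setminus\{g_1',\dots,g_m'\}$ (which exists by infinitude) and chains the divisibilities $\lm(g)=\lm(g_i)b=\lm(g_i')b_ib$ to contradict the right irreducibility assertion of Proposition~\ref{prop right GB}. Your argument instead identifies, for every right Gr\"obner basis whose leading monomials satisfy the mutual right-irreducibility condition, the set of leading monomials as exactly the prefix-minimal elements of $\lm(I)$, with $\lm$ injective on the basis; applying this to both $\rho(I)$ and the (finite) reduced right Gr\"obner basis $H$ then yields $|\rho(I)|=|H|<\infty$. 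Your approach is constructive rather than by contradiction, and as a byproduct it exhibits a canonical description of $\lm(\rho(I))$ and shows it coincides with $\lm(H)$, which the paper's shorter argument does not make explicit. One small point of precision: your ``antichain under the prefix order'' hypothesis should be understood as the paper's condition that $\lm(g)$ is right irreducible modulo $\lm(G\setminus\{g\})$ (which includes trivial right multiples, hence injectivity of $\lm$); a literal antichain on the \emph{set} $\lm(G)$ would not by itself rule out two distinct basis elements sharing a leading monomial, though your injectivity argument makes clear you intend the stronger reading, and both $\rho(I)$ and $H$ do satisfy it.
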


\begin{proof}
The implications $(3) \Rightarrow (2) \Rightarrow (1)$ are clear. 
Furthermore, the implication $(1) \Rightarrow (2)$ follows immediately from Lemma~\ref{lemma right gb} and the fact that a finite set of polynomials can always be transformed into a finite right reduced set.
For the implication $(2) \Rightarrow (3)$, assume that $I$ has a finite right Gr\"obner basis, say $G = \{g_1,\dots,g_m\} \subseteq K\<X>$, and assume that $\rho(I)$ is infinite.
Since $\rho(I)$ is a right Gr\"obner basis of $I$, there exist $g_1',\dots,g_m'\in \rho(I$) and $b_1,\dots,b_m\in\<X>$ such that $\lm(g_i) = \lm(g_i')b_i$ for all $ i =1,\dots,m$.
Now, let $g \in \rho(I) \setminus \{g_1',\dots,g_m'\}$ be arbitrary but fixed. 
We note that such an element exists since $\rho(I)$ is infinite.
Then, there exist $1 \leq i \leq m$ and $b \in \<X>$ such that $\lm(g) = \lm(g_i)b = \lm(g_i') b_i b$.
Since $g \neq g_i'$, this is a contradiction to the assertion of Proposition~\ref{prop right GB} that $\lm(f)$ is right irreducible modulo $\lm(\rho(I) \setminus \{f\})$ for all $f \in \rho(I)$.
\end{proof}

The set $\rho(I)$ depends on the monomial ordering w.r.t.\ which the reduced Gr\"obner basis of $I$ is computed.
However, Lemma~\ref{lemma right gb} and Proposition~\ref{prop equiv finite right GB} imply that the finiteness of $\rho(I)$ is independent of the monomial ordering.

We will now investigate under which conditions the set $\rho(I)$ is finite.
Clearly, if the reduced Gr\"obner basis $G$ of $I$ is infinite, then so must be $\rho(I)$, or equivalently, the finiteness of $\rho(I)$ implies the finiteness of $G$. 
The converse, however, is not universally true. 
The finiteness of $G$ alone is only sufficient to guarantee that also $\rho(I)$ is finite if $I$ is the trivial ideal $I = \{0\}$.
In this case, $G = \emptyset = \rho(I)$.
We will see that in all other cases, we need the additional requirement that also $B_I$ is finite.
To this end, we recall that an ideal $I\subseteq K\<X>$ is called \emph{zero-dimensional} if the quotient algebra $K\<X>/I$ is finite dimensional as a $K$-vector space,
or equivalently, if the set $B_I$ is finite.
It is well-known that any zero-dimensional ideal has a finite Gr\"obner basis w.r.t.\ any monomial ordering (see e.g.~5.1 in~\cite{Mor87}).

\begin{lemma}\label{lemma finite two sided GB}
Let $I \subseteq K\<X>$ be a zero-dimensional ideal.
Then, the reduced Gr\"obner basis of $I$ is finite. 
\end{lemma}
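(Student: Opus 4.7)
The plan is to bound the lengths of leading monomials in the reduced Gr\"obner basis $G$ of $I$ using the finiteness of the set $B_I$ of monomials irreducible modulo $\lm(I)$. Once $\lm(G)$ is shown to be finite, injectivity of $g \mapsto \lm(g)$ on any reduced set (two distinct elements with the same leading monomial would each be reducible modulo the other) immediately yields that $G$ itself is finite.

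First I would dispose of the degenerate case $I = K\<X>$, whose reduced Gr\"obner basis is $\{1\}$. In every other case $1 \in B_I$, so $B_I$ is a nonempty finite set by the zero\nobreakdash-dimensionality hypothesis, and $N \coloneqq \max\{|m| : m \in B_I\}$ is well\nobreakdash-defined.

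The central step is to show that, for every $g \in G$, each proper prefix of $\lm(g)$ lies in $B_I$. This combines two facts: since $G$ is a Gr\"obner basis, every element of $\lm(I)$ is a multiple of some element of $\lm(G)$; and since $G$ is reduced, $\lm(G)$ is an antichain under divisibility. If a proper prefix $p$ of $\lm(g)$, say with $\lm(g) = pq$ and $q \neq 1$, were in $\lm(I)$, one could write $p = a\lm(g')b$ for some $g' \in G$ and $a, b \in \<X>$. Then $\lm(g) = a\lm(g')bq$ would exhibit $\lm(g)$ as a multiple of $\lm(g')$, and the antichain property would force $\lm(g) = \lm(g')$ together with $a = bq = 1$; but $bq = 1$ in $\<X>$ requires $q = 1$, contradicting $q \neq 1$.

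With this in hand, each $g \in G$ has $\lm(g) \neq 1$ (otherwise $I = K\<X>$), so the proper prefix of $\lm(g)$ of length $|\lm(g)| - 1$ lies in $B_I$, yielding $|\lm(g)| \leq N + 1$. Hence $\lm(G)$ is contained in the finite set of monomials of length at most $N+1$, and $G$ is finite. The main obstacle is the antichain step: one must use reducedness precisely to rule out that a proper prefix of a leading monomial belongs to $\lm(I)$. Once this is established, the length bound from $|B_I| < \infty$ closes the argument.
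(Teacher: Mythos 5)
Your argument is correct. The crux is the observation that, for a reduced Gr\"obner basis $G$, every proper prefix of each $\lm(g)$ must lie in $B_I = \<X>\setminus\lm(I)$: otherwise some proper prefix $p$ would satisfy $p = a\lm(g')b$ with $g'\in G$, making $\lm(g) = a\lm(g')bq$ ($q\neq 1$) a proper right multiple of $\lm(g')$, and reducedness forces $g'=g$, after which length comparison in the free monoid kills $a$, $b$, and $q$ simultaneously, contradicting $q\neq 1$. Since $X$ is finite and $B_I$ is finite, this bounds $|\lm(g)|$ by $N+1$, and injectivity of $\lm$ on a reduced set finishes. Each step checks out, including the handling of $I=K\<X>$ (reduced GB $\{1\}$) and the fact that $1\in B_I$ otherwise.

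For comparison: the paper does not supply a proof of this lemma at all; it points to the classical source (Mora's 1987 preprint, \S5.1). So you have not diverged from the paper's argument so much as supplied the self-contained derivation that the paper delegates to a citation. Your proof is the natural noncommutative analogue of the commutative fact that a zero-dimensional ideal has finitely many standard monomials and hence a Gr\"obner basis of bounded degree: the ``proper prefix lies in $B_I$'' step is exactly the place where the noncommutative structure enters, and you handle it correctly via the antichain property of a reduced basis. One small presentational remark: you write that the antichain property ``forces $\lm(g)=\lm(g')$ together with $a=bq=1$''; strictly speaking reducedness first forces $g=g'$ (since $\lm(g)$ may only be a multiple of its own leading monomial among $\lm(G)$), and then the length count gives $a=b=q=1$. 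The logic is the same, but phrasing it this way avoids any appearance of assuming the injectivity of $\lm$ on $G$ before it is needed.
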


In fact, this condition is also sufficient for an ideal to have a finite right Gr\"obner basis.

\begin{lemma}\label{lemma finite right GB suff}
Let $I \subseteq K\<X>$ be a zero-dimensional ideal.
Then, the set $\rho(I)$ is finite.
\end{lemma}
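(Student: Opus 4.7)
The plan is to use the equivalent description of $\rho(I)$ given in equation~\eqref{equation right gb} together with the zero-dimensionality hypothesis. Since $I$ is zero-dimensional, the set $B_I = \langle X\rangle \setminus \lm(I)$ is finite by definition. Moreover, by Lemma~\ref{lemma finite two sided GB}, the reduced Gr\"obner basis $G$ of $I$ is also finite.

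From the reformulation in~\eqref{equation right gb},
\[
    \rho(I) \subseteq \{mg \mid m \in B_I,\ g \in G\},
\]
because every element of $\rho(I)$ has the form $mg$ with $m \in B_I$ and $g \in G$ (the additional prefix condition only further restricts the set). Since both $B_I$ and $G$ are finite, the right-hand side is a finite set, and hence $\rho(I)$ is finite as well. In fact, one obtains the explicit (though crude) bound $|\rho(I)| \le |B_I|\cdot|G|$.

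The only step that requires a brief justification is the appeal to~\eqref{equation right gb}, i.e.\ that one may indeed restrict the prefactor $m$ to lie in $B_I$ rather than in all of $\langle X\rangle$. This is noted in the remark following Proposition~\ref{prop right GB}, and it is also immediate from the definition of $\rho(I)$: if $g \in G$ is nonzero (which we may assume, else the statement is trivial), then $m$ itself is a proper prefix of $\lm(mg) = m\,\lm(g)$, so the prefix irreducibility condition forces $m \in B_I$. I don't expect any genuine obstacle; the whole argument is a one-line consequence of combining the finiteness of $B_I$ (by hypothesis) with the finiteness of $G$ (by Lemma~\ref{lemma finite two sided GB}) and the description~\eqref{equation right gb}.
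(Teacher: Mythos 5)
Your proof is correct and follows exactly the paper's own argument: invoke Lemma~\ref{lemma finite two sided GB} for the finiteness of $G$, note that $B_I$ is finite by zero-dimensionality, and conclude via the reformulation~\eqref{equation right gb}. The only nitpick is in your justification of the reformulation: $m$ is a proper prefix of $\lm(mg)$ only when $\lm(g) \neq 1$ (not merely $g \neq 0$), but the case $\lm(g) = 1$ forces $G = \{1\}$, i.e.\ $I = K\<X>$, where $\rho(I)$ is trivially finite, so nothing is lost.
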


\begin{proof}
It follows from Lemma~\ref{lemma finite two sided GB} that the reduced Gr\"obner basis of $I$ is finite.
Then, the finiteness of $\rho(I)$ follows directly from the representation \eqref{equation right gb} and the fact that $B_I$ is finite.
\end{proof}

Surprisingly, for a non-trivial ideal $I$, the property of $I$ being zero-dimensional is also a necessary condition for it to have a finite right Gr\"obner basis.
This fact is captured in the following lemma.

\begin{lemma}\label{lemma finite right GB nec}
Let $I \subseteq K\<X>$ be a non-trivial ideal.
If $I$ has a finite right Gr\"obner basis, then $I$ is zero-dimensional.
\end{lemma}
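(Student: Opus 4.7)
The plan is to argue by contrapositive and contradiction: assume $I$ is non-trivial with a finite right Gr\"obner basis $G$ but $B_I \coloneqq \<X> \setminus \lm(I)$ is infinite (so $I$ is not zero-dimensional), and derive a contradiction. The degenerate case $I = K\<X>$ can be set aside since then $B_I = \emptyset$, so we may assume $1 \in B_I$.

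The first step is a combinatorial reduction. The set $B_I$ is closed under taking prefixes: if $u = pq \in B_I$ had a prefix $p \in \lm(I)$, then $u$, being a right multiple of $p$, would also lie in $\lm(I)$. Viewing $B_I$ as a subtree of the infinite $|X|$-ary word tree rooted at $1$, this tree has finite branching and infinitely many vertices, so K\"onig's Lemma yields an infinite word $\omega = x_{j_1} x_{j_2} \dotsm$ every finite prefix of which lies in $B_I$, i.e.\ has no $\lm(g)$ with $g \in G$ as a prefix.

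The second step exploits the two-sided nature of $I$ together with $G$ being a right Gr\"obner basis to force a short $\lm(g_i)$ along $\omega$. Pick a nonzero $f \in I$ (possible since $I \neq 0$), set $u \coloneqq \lm(f)$, and let $\ell \coloneqq \max\{|\lm(g)| \mid g \in G\}$. For any finite prefix $v$ of $\omega$, the left ideal property gives $vf \in I$, whence $vu = v\,\lm(f) = \lm(vf) \in \lm(I)$. Since $G$ is a right Gr\"obner basis, some $\lm(g_i)$ is a prefix of $vu$. If $|\lm(g_i)| \leq |v|$, then $\lm(g_i)$ is already a prefix of $v$, placing $v \in \lm(I)$ and contradicting $v \in B_I$; otherwise $|\lm(g_i)| > |v|$, which fails for any $v$ of length $\geq \ell$. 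Choosing $v$ to be the prefix of $\omega$ of length $\ell$ yields the contradiction.

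The main subtlety, rather than any single hard calculation, lies in the coordinated use of two-sidedness: prefix-closure of $B_I$ uses that $I$ is a \emph{right} ideal, while the step $vf \in I$ uses that $I$ is a \emph{left} ideal, and both are indispensable. Everything else is a clean combination of K\"onig's Lemma with the defining property of a right Gr\"obner basis and with the fact that leading monomials are multiplicative.
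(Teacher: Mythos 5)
Your proof is correct, and it rests on the same core mechanism as the paper's: multiply a monomial $m\in B_I$ on the \emph{left} of a nonzero element of $I$ (you use an arbitrary $f\in I$, the paper uses some $g\in G$), appeal to the right Gr\"obner basis property to find a short $\lm(g')$ dividing $\lm(mf)$ from the left, and then observe that the length bound $|\lm(g')|\leq N\leq|m|$ forces $\lm(g')$ to already be a prefix of $m$, contradicting $m\in B_I$. The only real difference is packaging: you argue by contrapositive and invoke K\"onig's Lemma to extract an infinite path in the prefix-closed tree $B_I$, whereas the paper argues directly that \emph{every} monomial of length at least $N$ lies in $\lm(I)$, so $B_I$ is contained in the finite set of monomials of length less than $N$. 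Your K\"onig step is valid but superfluous — prefix-closure of $B_I$ together with infiniteness already guarantees a monomial of length exactly $N$ in $B_I$ by a finite counting argument, and in fact the paper shows you don't even need prefix-closure: the direct bound on lengths does all the work.
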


\begin{proof}
Let $G$ be a finite right Gr\"obner basis of $I$ and denote $N = \max \{ |\lm(g)| \mid g \in G\}$.
Let $m \in \<X>$ such that $|m| \geq N$ and choose $g \in G$ arbitrary.
Then, $m g \in I$ and therefore, since $G$ is a right Gr\"obner basis, there exists $g' \in G$ such that $m \lm(g)$ is a right multiple of $\lm(g')$.
By the way $m$ was chosen, $m$ in fact has to be a right multiple of $\lm(g')$.
Consequently, $m \notin B_I$.
Since $m$ was arbitrary of length at least $N$, this shows that $B_I$ can contain only finitely many elements.
Hence, $I$ is zero-dimensional.
\end{proof}

Combining Proposition~\ref{prop equiv finite right GB} with Lemma~\ref{lemma finite right GB suff} and Lemma~\ref{lemma finite right GB nec}, we get the following theorem.

\begin{theorem}
Let $I \subseteq K\<X>$ be an ideal. 
Then, the following are equivalent.
\begin{enumerate}
	\item $I$ is finitely generated as a right ideal.
	\item $I$ has a finite right Gr\"obner basis.
	\item The set $\rho(I)$ is finite.
	\item  $I = \{0\}$ or $I$ is zero-dimensional.
\end{enumerate}
\end{theorem}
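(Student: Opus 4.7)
The plan is to piece together the theorem from the results already established in the excerpt. Proposition~\ref{prop equiv finite right GB} already delivers the equivalence of (1), (2), and~(3), so the remaining work is to establish that these are also equivalent to condition~(4). Since the three conditions (1)--(3) are interchangeable, I will prove the cycle (4)\,$\Rightarrow$\,(3) and (2)\,$\Rightarrow$\,(4), which together with Proposition~\ref{prop equiv finite right GB} closes the loop.

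For (4)\,$\Rightarrow$\,(3), I would split into the two cases given by condition~(4). If $I=\{0\}$, then the reduced Gr\"obner basis of $I$ is empty, and by the definition of $\rho(I)$ it follows immediately that $\rho(I)=\emptyset$, which is finite. If instead $I$ is zero-dimensional, then Lemma~\ref{lemma finite right GB suff} directly yields that $\rho(I)$ is finite.

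For (2)\,$\Rightarrow$\,(4), I would again split on whether $I$ is trivial. If $I=\{0\}$, condition~(4) holds by the first disjunct. Otherwise, $I$ is a non-trivial ideal which, by assumption, has a finite right Gr\"obner basis; applying Lemma~\ref{lemma finite right GB nec} then gives that $I$ is zero-dimensional, and condition~(4) holds via the second disjunct.

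Combining these two implications with the equivalences already established in Proposition~\ref{prop equiv finite right GB} finishes the proof. There is no genuine obstacle here: the technical content is entirely carried by the preceding lemmas, and the only care needed is the case distinction in~(4), which exists precisely because Lemma~\ref{lemma finite right GB nec} is stated only for non-trivial ideals. A remark on independence of the monomial ordering is worth including, but this also follows directly from Lemma~\ref{lemma right gb} as already noted after Proposition~\ref{prop equiv finite right GB}.
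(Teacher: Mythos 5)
Your proposal is correct and follows exactly the route the paper intends: Proposition~\ref{prop equiv finite right GB} handles the equivalence of (1)--(3), and the two implications (4)\,$\Rightarrow$\,(3) and (2)\,$\Rightarrow$\,(4) are obtained from Lemma~\ref{lemma finite right GB suff} and Lemma~\ref{lemma finite right GB nec} respectively, with the explicit case split on $I=\{0\}$ being precisely the bookkeeping the paper leaves implicit. The paper states the theorem as an immediate combination of these three results without spelling out the case analysis, so your write-up is a faithful (and slightly more detailed) rendering of the same argument.
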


However, the condition of $I$ being zero-dimensional is very strong and in practice usually not fulfilled. 
Typically, the reduced Gr\"obner basis of $I$ is not even finite.
In such cases, there is no chance to obtain a finite right generating set.
Consequently, we have to content ourselves with finite approximations of $\rho(I)$ and can therefore only work with a right subideal $I_\rho' \subsetneq I$.
A priori, it is usually not clear how to choose such a finite approximation.
For the operator statements we looked at so far, we simply selected all elements in $\rho(I)$ up to a certain degree bound.
Thus far, this has worked very well, allowing us to find the correct polynomials.

Moreover, in case of our applications, we can choose the subideal $I_\rho'$ so that it still contains all polynomials that are of interest to us and such that $I_\rho'$ is more likely to have a finite right generating set. 
This follows from the fact that for proving statements about operators, we are only interested in polynomials in $I$ that are compatible with a certain quiver $Q$.  
Hence, if the objective is to find polynomials in $I$ that are compatible with a given quiver $Q$, then $\rho(I)$ can be replaced by 
\begin{align*}
	\rho_Q(I) \coloneqq \left\{ f \in \rho(I) \mid f \text{ compatible with } Q\right\}.
\end{align*}

The following corollary from~\cite[Theorem 16]{RaabRegensburgerHosseinPoor2021}
ensures under very mild assumptions on the generators of $I$ that, when working with $\rho_Q(I)$ instead of $\rho(I)$, we can still form all compatible polynomials in $I$.

\begin{corollary}
Let $Q$ be a labelled quiver and let $f_1,\dots,f_r,f \in K\<X>$ be compatible with $Q$.
Furthermore, let $I = (f_1,\dots,f_r)$.
Then, 
\begin{align*}
	f \in I \iff f \in \left(\rho_Q(I)\right)_\rho.
\end{align*}
\end{corollary}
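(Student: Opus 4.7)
The $\Leftarrow$ direction is immediate: by Proposition~\ref{prop right GB} we have $\rho_Q(I) \subseteq \rho(I) \subseteq I$, hence $(\rho_Q(I))_\rho \subseteq I$.

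For the forward direction, my plan is to perform a right-reduction of $f$ modulo $\rho(I)$ and to verify that every reducer that is ever needed is already compatible with $Q$, so that it lies in $\rho_Q(I)$. The technical heart of the argument is the following claim I aim to establish: for every nonzero compatible $p \in I$ there exist $g \in \rho_Q(I)$ and a monomial $m \in \<X>$ with $\lm(p) = \lm(g)\, m$ such that $gm$ is compatible with $Q$ in the source--target class of $p$. Granted this, the reduction $p \mapsto p - \lc(p)\lc(g)^{-1}\, gm$ stays in $I$, preserves compatibility and source--target class, and strictly decreases the leading monomial; since $\preceq$ is a well-ordering this iteration terminates, producing $f \in (\rho_Q(I))_\rho$.

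To establish the claim, I would exploit that $\rho(I)$ is already a right Gr\"obner basis of $I$ (Proposition~\ref{prop right GB}): choose $g \in \rho(I)$ and a monomial $m$ with $\lm(p) = \lm(g)\, m$, and decompose $g = m'g'$ with $g' \in G$ and $m' \in \<X>$, where $G$ denotes the reduced two-sided Gr\"obner basis of $I$. Since $\lm(p) = m'\, \lm(g')\, m$ is a path in $Q$ from the source $v$ of $p$ to its target $w$, the three factors $m'$, $\lm(g')$, and $m$ must themselves be concatenable paths in $Q$. Provided $g'$ is compatible with $Q$ in the source--target class of $\lm(g')$, every monomial of $g = m'g'$ arises by prepending the path $m'$ to a path in that class, so $g$ is compatible; an analogous argument appending $m$ on the right shows that $gm$ is compatible in the class of $\lm(p)$.

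The main obstacle is thus to verify that every element of $G$ is compatible with $Q$. For this I would invoke \cite[Theorem~16]{RaabRegensburgerHosseinPoor2021} applied to $g' \in G \subseteq I$: it yields a cofactor representation $g' = \sum_j c_j a_j f_{i_j} b_j$ in which each summand $a_j f_{i_j} b_j$ is compatible with $Q$. Grouping the summands by source--target class decomposes $g' = \sum_{(\alpha,\beta)} g'_{(\alpha,\beta)}$ with each $g'_{(\alpha,\beta)} \in I$, and $g'_{(\alpha,\beta)}$ coincides with the projection of $g'$ onto the monomials of class $(\alpha,\beta)$. Exactly one such class contains $\lm(g')$; any other nonzero component would have a leading monomial $t$ lying in $\supp(g')$ with $t \prec \lm(g')$. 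Then $t$ would belong to $\lm(I) = (\lm(G))$, while reducedness of $G$ forces $t$ to be irreducible modulo $\lm(G \setminus \{g'\})$. The only remaining possible divisor in $\lm(G)$ is $\lm(g')$ itself, which contradicts $t \prec \lm(g')$. Hence all other components vanish, $g'$ is compatible, and the reduction argument above yields $f \in (\rho_Q(I))_\rho$.
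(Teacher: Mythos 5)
The paper offers no written proof of this corollary---it is deduced by a bare citation of \cite[Theorem~16]{RaabRegensburgerHosseinPoor2021}---so your detailed right-reduction argument is a genuine reconstruction rather than a paraphrase. Your overall strategy is sound: right-reduce $f$ modulo the right Gr\"obner basis $\rho(I)$, show each reducer can be chosen in $\rho_Q(I)$, and conclude by well-ordering. The propagation step (if $g'\in G$ is compatible and $\lm(p)=m'\lm(g')m$ is a path, then $m'g'$ and $m'g'm$ are compatible) and the reducedness argument (a nonzero component of $g'$ other than the one containing $\lm(g')$ would have a leading monomial in $\lm(I)\cap\supp(g')$ strictly below $\lm(g')$, contradicting reducedness) are both correct.

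The gap is in how you obtain the cofactor representation of $g'$. You apply Theorem~16 to $g'\in G$ to get a representation with all summands compatible, but the paper's own summary of that theorem takes compatibility of the target polynomial as a hypothesis---which is precisely what you are trying to establish for $g'$, so the invocation is circular. Fortunately the rest of your argument does not need it: take any cofactor representation $g'=\sum_j c_j a_j f_{i_j} b_j$. Because each $f_{i_j}$ is compatible, a fixed summand $a_jf_{i_j}b_j$ either has all its monomials paths in one source--target class, or has no path monomials at all (whether $a_jmb_j$ is a path depends on $a_j,b_j$ and on the class of $f_{i_j}$, not on $m$). Group the summands by class and collect the remaining ones into one further ``incompatible'' piece, which also lies in $I$ and whose support (non-paths) is disjoint from that of each compatible piece. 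Your disjoint-union-plus-reducedness argument then forces all but one piece to vanish. Since $\lm(g')$ divides $\lm(p)$, a path, the surviving piece is a compatible one, i.e.\ $g'$ is compatible. Alternatively, you could simply invoke the fact---recalled in the paper's Remark immediately after this corollary, with references---that the noncommutative Buchberger algorithm preserves compatibility, which gives compatibility of the reduced Gr\"obner basis $G$ directly.
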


Typically, the set $\rho_Q(I)$ is a lot smaller than $\rho(I)$.
To illustrate this point, we finish the proof of Theorem~\ref{thm example 1}.

\begin{remark}
It has been observed that the noncommutative version of Buchberger's algorithm preserves compatibility of polynomials with a quiver~\cite{HeltonStankusWavrik1998,chenavier2020compatible,LevandovskyySchmitz2020}.
This means that, if all polynomials that are given as input to this algorithm are compatible with a quiver, then so are all polynomials in the output.
We note that this behavior carries over to one-sided ideal intersections. 
More precisely, if all generators of two right ideals $I_\rho$ and $J_\rho$ are compatible with a quiver,
then so are all polynomials in a Gr\"obner basis of $I_\rho \cap J_\rho$ when Buchberger's algorithm is used in combination with Theorem~\ref{thm intersect right ideals}.
In case of two-sided ideal intersections, however, this is no longer true. 
A Gr\"obner basis of the intersection $I \cap J$ of two ideals $I$ and $J$ with compatible generators computed with Buchberger's algorithm might contain incompatible polynomials. 
The reason for this is the introduction of the commutator relations $tx_k - x_kt$ during the computation of $I\cap J$.
\end{remark}

\begin{example}
Recall that we had translated the assumptions of Theorem~\ref{thm example 1} into a set $F$ of 10 noncommutative polynomials with integer coefficients in 12 indeterminates.
More precisely, the set $F$ consists of the following elements:
\begin{align*}
	&a a^\dagger a - a, & &a^* (a^\dagger)^* a^* - a^*, & &(a^\dagger)^* a^* - a a^\dagger, \\
	 &a^\dagger a a^\dagger - a^\dagger, & &(a^\dagger)^* a^* (a^\dagger)^* - (a^\dagger)^*, & &a^* (a^\dagger)^* - a^\dagger a,\\
	&c - ay, & &c^* - y^* a^*, & &c^* (a^\dagger)^* - b^* z, & &a^\dagger c - z^* b,
\end{align*}
where the first two lines correspond to the existence of the Moore-Penrose inverse $A^\dagger$ (plus the respective adjoint statements)
and the last line corresponds to the range inclusions $\mathcal{R}(C) \subseteq \mathcal{R}(A)$ and $\mathcal{R}((A^\dagger C)^*) \subseteq \mathcal{R}(B^*)$ (plus the respective adjoint statements).
The quiver encoding the domains and codomains of the operators involved is depicted in Figure~\ref{fig:quiver}.

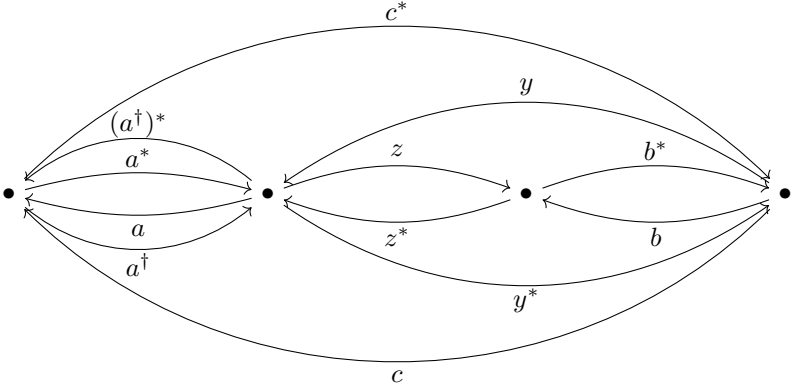
\begin{figure}
 \begin{tikzpicture}[ampersand replacement=\&,baseline=3,xshift=-3em]
    \matrix (m) [matrix of math nodes, column sep=30mm]
       { \bullet \& \bullet \& \bullet \& \bullet \\};
        \path[->] (m-1-2) edge [bend right = -15] node [yshift=-2mm] {$a$} (m-1-1);
    \path[->] (m-1-1) edge [bend right = -15] node [yshift=2mm] {$a^*$} (m-1-2);
     \path[->] (m-1-1) edge [bend left = -40] node [yshift=-2mm] {$a^\dagger$} (m-1-2);
    \path[->] (m-1-2) edge [bend left = -40] node [yshift=2mm] {$(a^\dagger)^*$} (m-1-1);
     \path[->] (m-1-4) edge [bend right = -20] node [yshift=-2mm] {$b$} (m-1-3);
    \path[->] (m-1-3) edge [bend right = -20] node [yshift=2mm] {$b^*$} (m-1-4);
     \path[->] (m-1-4) edge [bend right = -45] node [yshift=-2mm] {$c$} (m-1-1);
    \path[->] (m-1-1) edge [bend right = -45] node [yshift=2mm] {$c^*$} (m-1-4);
     \path[->] (m-1-4) edge [bend left = -35] node [yshift=2mm] {$y$} (m-1-2);
    \path[->] (m-1-2) edge [bend left = -35] node [yshift=-2mm] {$y^*$} (m-1-4);
    \path[->] (m-1-2) edge [bend right = -20] node [yshift=2mm] {$z$} (m-1-3);
    \path[->] (m-1-3) edge [bend right = -20] node [yshift=-2mm] {$z^*$} (m-1-2);
  \end{tikzpicture}
  \caption{Quiver encoding the domains and codomains of the operators of Theorem~\ref{thm example 1}}
  \label{fig:quiver}
\end{figure}

The existence of a solution $X$ of $AXB = C$ corresponds to finding a compatible polynomial $axb - c \in (F)$, where $a,b,c$ are known but $x$ is unknown.
Using one-sided ideal intersection as discussed in this section, we can now find such a polynomial.
To this end, we intersect $I = (F)$ with the right ideal $(a,c)_\rho$.
Working over $K=\mathbb{Q}$ with a degree lexicographic ordering, the reduced Gr\"obner basis of $I$ consists of 17 polynomials.
In this example, $\rho(I)$ as well as $\rho_Q(I)$ seem to be infinite.
However, it turns out that it suffices to consider the finite approximation $\rho^{(5)}_Q(I) \subseteq \rho_Q(I)$ of compatible polynomials of degree at most 5 to find a suitable polynomial in the intersection.
We note that $|\rho^{(5)}_Q(I)| = 263$ while computing all elements in $\rho(I)$ up to degree 5 would yield 3485 elements.
Then, the reduced Gr\"obner basis of $(\rho^{(5)}_Q(I))_\rho \cap (a,c)_\rho$ contains the compatible polynomial $az^*b - c$, which proves Theorem~\ref{thm example 1} and shows that $Z^*$ is a solution of $AXB = C$.
\end{example}

To end this section, we mention another example of applying the methods described above.
In a recent application of the algebraic proof framework, we investigated automated proofs of the triple reverse order law for Moore-Penrose inverses~\cite{Hartwig}.
All parts of these proofs that can be reduced to verifying ideal membership of an explicitly given polynomial could be automatized back then.
The only step that does not lead to such a problem is the proof of certain range inclusions $\mathcal{R}(A) \subseteq \mathcal{R}(B)$.
As explained in the introduction, this leads to the problem of finding a polynomial of the form $a - bx$, where $a,b$ are known but $x$ is an unknown polynomial, in the ideal $(F)$ generated by the assumptions $F$ of the statement.
Back then, a polynomial of this form had to be found by hand.
Using the methods presented in this section, this step can now be supported by computing the intersection of the two-sided ideal $(F)$ with the right ideal generated by $a$ and $b$.
For further information, we refer to the \textsc{Mathematica} notebook accompanying this paper.


\section{Homogeneous part} \label{sec homogeneous part}

The previous section has shown that ideal intersections can provide a useful tool for computing elements of special form in an ideal.
However, ideal intersections require to explicitly know a factor in every summand of the element.
In some situations we do not have this explicit knowledge.
This is for example the case when searching for polynomials in an ideal $I \subseteq K\<X,X^*>$ of the form $p - q^* q \in I$, where $p$ is known but $q$, and therefore also $q^*$, are unknown.
Recall that $q^*$ is the image of $q$ under the involutive antiautomorphism ${}^*$ on  $K\<X,X^*>$ that maps each $x_i \in X$ to $x_i^* \in X^*$ and each $x_i^*$ back to $x_i$. 
As already mentioned in the introduction, such a task corresponds to determining the positivity of the linear operator represented by the polynomial $p$.
If $q$ is a monomial, a polynomial of the desired form can be found by computing the \emph{homogeneous part} of the ideal $I$.

So, in this section, we describe how the homogeneous part of an ideal can be computed. 
Our result is a generalization of ~\cite{Mil16}, which describes how the set of all monomials contained in a commutative ideal can be computed.
We note that a variant of the approach described in ~\cite{Mil16} can also be found in~\cite[Algorithm~4.4.2]{SST00}.

First, we recall that an \emph{$M$-grading} of a ring $R$ by a monoid $(M,+)$ is given by a decomposition
$R = \bigoplus_{\alpha \in M} R_\alpha$
of $R$ into a direct sum of abelian groups $R_\alpha$ such that $R_\alpha R_\beta \subseteq R_{\alpha+\beta}$ for all $\alpha, \beta \in M$.
Given an $M$-grading of $R$, an element $r\in R$ is called \emph{homogeneous} if $r \in R_\alpha$ for some $\alpha \in M$.
If $r \neq 0$, then $\alpha$ is called the \emph{degree} of $r$, denoted by $\deg(r) = \alpha$.
Every nonzero $r \in R$ can be written uniquely as $r = r_{\alpha_1} + \dots + r_{\alpha_d}$ with $\alpha_j \in M$ and $0 \neq r_{\alpha_j} \in R_{\alpha_j}$ for all $j = 1,\dots,d$.
The elements $r_{\alpha_j}$ are referred to as the \emph{homogeneous components} of $r$.
An ideal $I \subseteq R$ is called \emph{homogeneous} if $I$ is generated by homogeneous elements.
We note that there are also other equivalent characterisations of a homogeneous ideal.
The following one will be relevant for the proof of Theorem~\ref{thm homogeneous part}.
An ideal $I \subseteq R$ is homogeneous if and only if, with every nonzero element $r \in I$, also all homogeneous components of $r$ lie in $I$.

A matrix $A\in\RR^{n\times k}$ with rows $a_1,\dots,a_n \in \RR^k$ specifies a grading of the free algebra $K\<X>$ by the monoid $(\RR^k,+)$.
The grading is given by the monoid homomorphism $\deg_A: \<X> \to \RR^k$ defined on the basis $X$ to be $\deg_A(x_i) = a_i$, for $i = 1,\dots,n$.
This map decomposes $K\<X>$ into a direct sum of the abelian subgroups
\begin{align*}
	K\<X>_\alpha = \left\{\sum_{m\in\<X>} c_m m \mid c_m \in K, \deg_A(m) = \alpha \text{ if} c_m \neq 0\right\},
\end{align*}
for $\alpha \in \RR^k$.
Given a grading $\deg_A$ of $K\<X>$ and an ideal $I \subseteq K\<X>$, we can now ask what are the polynomials in $I$ that are homogeneous w.r.t.\ $\deg_A$.
More precisely, we want to consider the ideal generated by all these polynomials.
Analogous to the commutative case (see e.g.~\cite[Tutorial 50]{KR05}), we call this ideal the \emph{homogeneous part} of $I$ w.r.t.\ $\deg_A$.

\begin{definition}
Let $I \subseteq K\<X>$ be an ideal and let $A\in\RR^{n\times k}$.
We denote by
$\hom_A(I) \coloneqq (f \in I \mid f \text{ is homogeneous w.r.t.} \deg_A )$
the \emph{homogeneous part} of $I$ w.r.t.\ $\deg_A$.
\end{definition}

The next lemma follows directly from the definition above.

\begin{lemma}\label{lemma homogeneous ideal}
Let $I \subseteq K\<X>$ be an ideal and let $A\in\RR^{n\times k}$.
Then, $\hom_A(I)$ is the largest homogeneous ideal w.r.t.\ $\deg_A$ contained in $I$.
\end{lemma}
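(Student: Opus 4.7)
The plan is to verify three things in order: that $\hom_A(I)$ is contained in $I$, that it is homogeneous with respect to $\deg_A$, and that any other homogeneous ideal contained in $I$ must be contained in $\hom_A(I)$.

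First, containment in $I$ is immediate from the definition: $\hom_A(I)$ is generated by elements of $I$, and since $I$ is itself an ideal, any ideal generated by a subset of $I$ must lie in $I$. Second, $\hom_A(I)$ is homogeneous by construction, since its generating set consists only of polynomials that are homogeneous with respect to $\deg_A$.

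The substantive step is maximality. Let $J \subseteq I$ be an arbitrary homogeneous ideal with respect to $\deg_A$. I would invoke the equivalent characterisation of homogeneous ideals recalled in the paragraph preceding the definition: an ideal is homogeneous if and only if, together with every nonzero element, it contains all of that element's homogeneous components. Applying this to $J$, every $f \in J$ can be decomposed as $f = f_{\alpha_1} + \dots + f_{\alpha_d}$ with each $f_{\alpha_j} \in J$. Since $J \subseteq I$, each $f_{\alpha_j}$ is a homogeneous element of $I$, hence belongs to the generating set defining $\hom_A(I)$ and so to $\hom_A(I)$ itself. Summing gives $f \in \hom_A(I)$, and therefore $J \subseteq \hom_A(I)$.

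There is no real obstacle; the only subtlety is remembering to use the component-wise characterisation of homogeneous ideals rather than the generator-based definition, since a generating set of $J$ by homogeneous elements is not explicitly given. Once that characterisation is applied, the argument is essentially a one-line verification.
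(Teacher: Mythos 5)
Your proof is correct, and since the paper simply states that the lemma ``follows directly from the definition'' without giving an argument, you are supplying details the paper leaves implicit. The three-part structure (containment, homogeneity, maximality) is exactly what is needed, and the maximality step via the component-wise characterisation is valid.

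One small remark: the ``subtlety'' you flag at the end is not really one. By the paper's definition, a homogeneous ideal $J$ \emph{is} an ideal generated by homogeneous elements, so a homogeneous generating set of $J$ exists by hypothesis even if it is not ``explicitly given.'' You could therefore run the maximality step directly off the generator-based definition: pick any homogeneous generating set $H$ of $J$; each $h \in H$ lies in $J \subseteq I$ and is homogeneous, hence belongs to the generating set of $\hom_A(I)$, and so $J = (H) \subseteq \hom_A(I)$. This is marginally shorter than passing through the decomposition into homogeneous components, but both routes are equally rigorous and close the argument.
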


In the following, we describe how a Gr\"obner basis of $\hom_A(I)$ can be computed.
In our approach, we have to restrict ourselves to the case that $A$ has only integer entries, that is, $A \in \ZZ^{n \times k}$.
Note that this is equivalent to having $A \in \QQ^{n \times k}$ as, for any nonzero $c \in \QQ$, the matrix $cA$ induces the same decomposition of $K\<X>$ as $A$.

For the following, some notation is needed.
First, for two sets $Y$ and $Z$, we denote by $[Y,Z] \coloneqq \{yz - zy \mid y\in Y, z\in Z\}$ the set of commutator relations between $Y$ and $Z$.
Then, with two new sets of indeterminates $T = \{t_1,\dots,t_k\}$ and $T^{-1} = \{t_1^{-1},\dots,t_k^{-1}\}$, we let $J \subseteq K\<X,T,T^{-1}>$ be the ideal generated by 
\begin{align*}
	[X \cup T \cup T^{-1},T \cup T^{-1}] \cup \{1 - t_j t_j^{-1} \mid j=1,\dots,k\}.
\end{align*}
Furthermore, we let $\mathcal{A} =  K\<X,T,T^{-1}> / J$.
We denote the equivalence class of an element $f \in K\<X,T,T^{-1}>$ in $\mathcal{A}$ by $\overline f$.
Note that, in $\mathcal{A}$, the elements $\overline t_j$ and $\overline{t_j^{-1}}$ commute with each other and with everything else.
Furthermore, they are also invertible.
Consequently, we can think of the equivalence classes of monomials in the variables $T \cup T^{-1}$ as commutative monomials.
Hence, there is a bijection $\tau$ between exponent vectors and suitable representatives of these equivalence classes.
More precisely, for $\alpha = (\alpha_1,\dots,\alpha_k)\in\ZZ^k$ we define
\begin{align*}
	\tau(\alpha) \coloneqq s_1^{|\alpha_1|} \dots s_k^{|\alpha_k|} \in \<T,T^{-1}>, \quad \text{where } s_j = \begin{cases} t_j & \text{if } \alpha_j \geq 0 \\t_j^{-1} & \text{otherwise}\end{cases}.
\end{align*}

\begin{example}
For $\alpha = (3,-2,0,1) \in \ZZ^4$, we have $\tau(\alpha) = t_1^3 (t_2^{-1})^2 t_4$.
\end{example}

So, the map $\tau$ turns a vector of integers into a monomial in $\<T,T^{-1}>$.
The following lemma captures an important property of this map in $\mathcal{A}$.
It follows from the fact that the equivalence classes $\overline t_j$ commute with each other and with their designated inverses.

\begin{lemma}\label{lemma property tau}
Let $\alpha, \beta \in \ZZ^k$. Then, $\overline{\tau(\alpha+\beta)} = \overline{\tau(\alpha)} \cdot \overline{\tau(\beta)}$.
\end{lemma}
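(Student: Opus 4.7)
The plan is to show that the lemma reduces to the group law for integer exponents of a single invertible element, applied componentwise, once one has checked that the $\overline{t_j}$ form a pairwise commuting family of invertible elements in $\mathcal{A}$.

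First I would record two basic consequences of the defining generators of $J$. Since $1 - t_j t_j^{-1} \in J$, we have $\overline{t_j}\cdot\overline{t_j^{-1}} = \overline{1}$, and because $t_j t_j^{-1} - t_j^{-1} t_j \in [T\cup T^{-1}, T\cup T^{-1}] \subseteq J$, also $\overline{t_j^{-1}}\cdot\overline{t_j} = \overline{1}$. Hence $\overline{t_j}$ is a two-sided unit in $\mathcal{A}$ with inverse $\overline{t_j^{-1}}$, and I may write $\overline{t_j}^{-1}$ unambiguously for the latter. The commutator generators of $J$ further give that all $\overline{t_j}^{\pm 1}$ commute with each other (and with every $\overline{x_i}$, though this is not needed here). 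Thus for each $j$ and each $n \in \ZZ$, the symbol $\overline{t_j}^{\,n}$ is well defined, satisfies $\overline{t_j}^{\,m}\cdot\overline{t_j}^{\,n} = \overline{t_j}^{\,m+n}$, and different $\overline{t_j}^{\,n}$ and $\overline{t_\ell}^{\,n'}$ commute.

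Next I would verify the identification
\[
\overline{\tau(\alpha)} \;=\; \overline{t_1}^{\,\alpha_1}\cdots\overline{t_k}^{\,\alpha_k}
\]
for every $\alpha \in \ZZ^k$. By the definition of $\tau$, $\tau(\alpha) = s_1^{|\alpha_1|}\cdots s_k^{|\alpha_k|}$ with $s_j = t_j$ when $\alpha_j \ge 0$ and $s_j = t_j^{-1}$ otherwise, so $\overline{s_j^{|\alpha_j|}} = \overline{t_j}^{\,\alpha_j}$ in either case from the previous paragraph.

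Finally, using commutativity of the $\overline{t_j}^{\,\bullet}$ to rearrange factors and the one-variable exponent law, I obtain
\[
\overline{\tau(\alpha)}\cdot\overline{\tau(\beta)}
= \prod_{j=1}^{k} \overline{t_j}^{\,\alpha_j}\cdot \prod_{j=1}^{k} \overline{t_j}^{\,\beta_j}
= \prod_{j=1}^{k} \overline{t_j}^{\,\alpha_j+\beta_j}
= \overline{\tau(\alpha+\beta)}.
\]
The only mildly delicate point, and thus the most likely place to slip up in a formal write-up, is the step $\overline{s_j^{|\alpha_j|}\cdot s_j'^{|\beta_j|}} = \overline{t_j}^{\,\alpha_j+\beta_j}$ when $\alpha_j$ and $\beta_j$ have opposite signs, where one must invoke $\overline{t_j}\cdot\overline{t_j^{-1}} = \overline{1}$ to cancel the appropriate number of factors. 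This is however purely a bookkeeping check, since it takes place inside the commutative subgroup generated by $\overline{t_j}$ in $\mathcal{A}^\times$.
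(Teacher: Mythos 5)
Your proof is correct and fills in the details behind exactly the observation the paper gives in lieu of a formal proof, namely that the lemma ``follows from the fact that the equivalence classes $\overline{t_j}$ commute with each other and with their designated inverses.'' You verify two-sided invertibility of $\overline{t_j}$ from the generators $1 - t_j t_j^{-1}$ together with the commutator $t_j t_j^{-1} - t_j^{-1} t_j \in J$, establish the identification $\overline{\tau(\alpha)} = \overline{t_1}^{\alpha_1}\cdots\overline{t_k}^{\alpha_k}$, and then conclude by the exponent law inside the commutative subgroup generated by the $\overline{t_j}$; this is the same route the paper gestures at, carried out in full.
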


For a fixed matrix $A\in \ZZ^{n \times k}$, we can use the map $\tau$ to define the $K$-algebra homomorphism
\begin{align*}
	\varphi_A: K\<X> \to K\<X,T,T^{-1}>, \quad x_i \mapsto x_i \tau(a_i),
\end{align*}
where $a_i$ denotes the $i$-th row of $A$.
Furthermore, we let $\overline \varphi_A : K\<X> \to \mathcal{A}$ be the composition of $\varphi_A$ with the canonical epimorphism that sends each element in $K\<X,T,T^{-1}>$ to its equivalence class.
Applying $\overline \varphi_A$ to a polynomial $f$ maps each monomial $m$ appearing in $f$ to $\overline{mm_t}$,
where $m_t = \tau(\deg_A(m)) \in \<{T, T^{-1}}>$ is a monomial in the variables $T \cup T^{-1}$ encoding $\deg_A(m)$.
Finally, for an ideal $I \subseteq K\<X>$ with generating set $F \subseteq K\<X>$, we consider the \emph{extension} of $I$ along the homomorphism $\overline \varphi_A$, which we denote by
\begin{align*}
	I^{\overline \varphi_A} \coloneqq \left(\overline\varphi_A(f) \mid f \in F\right) \subseteq \mathcal{A}.
\end{align*}
Note that this definition is independent of the generating set $F$. 
Then, we can state the main result of this section.

\begin{theorem}\label{thm homogeneous part}
Let $I \subseteq K\<X>$ be an ideal and let $A \in \ZZ^{n\times k}$. 
Furthermore, let $I^{\overline \varphi_A}$ be the extension of $I$ along $\overline \varphi_A$.
Then, 
\begin{align*}
	\hom_A(I) = I^{\overline \varphi_A} \cap K\<X> \coloneqq \bigcup_{\overline f\,\in\,I^{\overline \varphi_A}} \overline f \cap K\<X>.
\end{align*}
\end{theorem}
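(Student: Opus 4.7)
The plan is to equip the algebra $\mathcal{A}$ with an auxiliary $\ZZ^k$-grading $\deg^*$ under which every element of $I^{\overline\varphi_A}$ becomes homogeneous, and then use this homogeneity to separate $\overline f$ into its $\deg_A$-homogeneous components inside $\mathcal{A}$. Concretely, I would define $\deg^*$ on $K\<X,T,T^{-1}>$ by setting $\deg^*(x_i) = a_i$, $\deg^*(t_j) = -e_j$ and $\deg^*(t_j^{-1}) = e_j$, where $e_j$ is the $j$-th standard basis vector of $\ZZ^k$, and then check that every defining relation of $J$ is $\deg^*$-homogeneous so that $\deg^*$ descends to a grading of $\mathcal{A}$. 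Using Lemma~\ref{lemma property tau}, one finds $\deg^*(\overline{\tau(\alpha)}) = -\alpha$, and so each monomial summand of $\overline\varphi_A(g) = \sum_m c_m\, \overline{m\,\tau(\deg_A(m))}$ has $\deg^*$-degree $0$. This makes every generator of $I^{\overline\varphi_A}$ $\deg^*$-homogeneous of degree $0$, so $I^{\overline\varphi_A}$ is a $\deg^*$-homogeneous ideal in $\mathcal{A}$.

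A short preliminary lemma is needed: $f \in \hom_A(I)$ if and only if every $\deg_A$-homogeneous component of $f$ lies in $I$. One direction uses Lemma~\ref{lemma homogeneous ideal} together with the standard fact that a homogeneous ideal contains all homogeneous components of its elements; the other is immediate since each component is itself a homogeneous element of $I$. With this in hand, the inclusion $\hom_A(I) \subseteq I^{\overline\varphi_A} \cap K\<X>$ follows by writing $f = \sum_\alpha f_\alpha$ with $f_\alpha \in I$. Since $\overline\varphi_A$ is a $K$-algebra homomorphism, $\overline\varphi_A(f_\alpha) \in I^{\overline\varphi_A}$, and by homogeneity of $f_\alpha$ we have $\overline\varphi_A(f_\alpha) = \overline{f_\alpha} \cdot \overline{\tau(\alpha)}$. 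Multiplying by the inverse $\overline{\tau(-\alpha)}$ of $\overline{\tau(\alpha)}$, which exists by Lemma~\ref{lemma property tau}, yields $\overline{f_\alpha} \in I^{\overline\varphi_A}$, and summing over $\alpha$ gives $\overline f \in I^{\overline\varphi_A}$.

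For the reverse inclusion, assume $\overline f \in I^{\overline\varphi_A}$. Each $\deg_A$-homogeneous component $f_\alpha$, regarded as an element of $\mathcal{A}$, is $\deg^*$-homogeneous of degree $\alpha$, so $\sum_\alpha \overline{f_\alpha}$ is precisely the $\deg^*$-homogeneous decomposition of $\overline f$ in $\mathcal{A}$. Since $I^{\overline\varphi_A}$ is $\deg^*$-homogeneous, each $\overline{f_\alpha}$ belongs to $I^{\overline\varphi_A}$. I would then apply the $K$-algebra homomorphism $\psi\colon \mathcal{A} \to K\<X>$ determined by $\overline{x_i} \mapsto x_i$ and $\overline{t_j} \mapsto 1$; this is well-defined because every generator of $J$ maps to $0$, and one verifies $\psi \circ \overline\varphi_A = \mathrm{id}$, whence $\psi(I^{\overline\varphi_A}) \subseteq I$. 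Applying $\psi$ to each $\overline{f_\alpha}$ gives $f_\alpha \in I$, so the preliminary lemma forces $f \in \hom_A(I)$.

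The main technical point I anticipate is the careful bookkeeping with the two gradings: the original $\deg_A$ lives on $K\<X>$, while the auxiliary $\deg^*$ uses the $T$-variables to absorb the $\deg_A$-degree of each $x_i$ and thereby renders every $\overline\varphi_A(g)$ homogeneous. Verifying that $\deg^*$ genuinely descends through $J$ to $\mathcal{A}$ is the key step; once this is in place, both inclusions follow from short formal arguments.
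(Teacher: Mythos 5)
Your proof is correct and follows essentially the same strategy as the paper: both proofs extend $\deg_A$ to a $\ZZ^k$-grading of $K\<X,T,T^{-1}>$ (and hence of $\mathcal{A}$) by assigning $t_j$ degree $-e_j$ and $t_j^{-1}$ degree $e_j$, observe that this makes $J$ and $I^{\overline\varphi_A}$ homogeneous, and use $\tau$ together with Lemma~\ref{lemma property tau} for the forward inclusion. The only difference is one of packaging in the reverse inclusion: the paper shows that $I^{\overline\varphi_A}\cap K\<X>$ is a homogeneous ideal contained in $I$ by writing out an explicit representation of $f$ in $K\<X,T,T^{-1}>$ and substituting $t_j=t_j^{-1}=1$, then invokes the maximality of $\hom_A(I)$ from Lemma~\ref{lemma homogeneous ideal}; you instead introduce the $K$-algebra retraction $\psi\colon\mathcal{A}\to K\<X>$ (which is the same substitution, packaged as a homomorphism on the quotient), apply it component-by-component to get $f_\alpha\in I$, and finish via the equivalent characterization of $\hom_A(I)$ in terms of homogeneous components. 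This is a clean and slightly more self-contained variant of the same argument, not a different route.
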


\begin{proof}
We note that it follows from the definition of $I^{\overline \varphi_A}$ that $f \in I$ implies $\overline \varphi_A(f) \in I^{\overline \varphi_A}$.
Now, to prove the inclusion $\hom_A(I) \subseteq I^{\overline \varphi_A} \cap K\<X>$, let $f \in \hom_A(I)$.
W.l.o.g.\ we can assume that $f$ is homogeneous.
Let $\alpha = \deg_A(f)\in \ZZ^k$.
Note that since $f$ is homogeneous, every term in $f$ has degree $\alpha$.
Due to this fact, in $\mathcal{A}$, we have $\overline{\tau(\alpha) f} = \overline \varphi_A(f) \in I^{\overline \varphi_A}$.
To see that this implies that also $\overline f \in I^{\overline \varphi_A}$, we use Lemma~\ref{lemma property tau} and compute
\begin{align*}
	\overline f = \overline{\tau(0)} \cdot \overline f = \overline{\tau(-\alpha)} \cdot \overline{\tau(\alpha)} \cdot  \overline f = \overline{\tau(-\alpha)} \cdot \overline{\tau(\alpha) f} \in I^{\overline \varphi_A}.
\end{align*}

For the other inclusion $I^{\overline \varphi_A} \cap K\<X> \subseteq \hom_A(I)$,
we show that $I^{\overline \varphi_A} \cap K\<X> \subseteq I$ and that $I^{\overline \varphi_A} \cap K\<X>$ is a homogeneous ideal. 
Then, the claim follows from Lemma~\ref{lemma homogeneous ideal}.
For the first part, let $f\in I^{\overline \varphi_A} \cap K\<X>$.
Then, $f$ can be written as 
\[
	f = \sum_{i=1}^d p_i \varphi_A(f_i) q_i + \sum_{i=1}^e u_i g_i v_i,
\]
with $f_i \in I$, $g_i \in [X \cup T \cup T^{-1},T \cup T^{-1}] \cup \{1 - t_j t_j^{-1} \mid j=1,\dots,k\}$ and $p_i ,q_i, u_i, v_i \in K\<X,T,T^{-1}>$.
Note that the left hand side of this identity does not depend on the indeterminates in $T$ and $T^{-1}$.
Hence, by setting $t_j = t_j^{-1} = 1$ for all $j = 1,\dots,k$, we obtain
$f = \sum_{i=1}^d \tilde p_i f_i \tilde q_i$ with $\tilde p_i, \tilde q_i \in K\<X>$.
This shows that $f \in I$.

Finally, to show that $I^{\overline \varphi_A} \cap K\<X>$ is a homogeneous ideal, we extend the $\ZZ^k$\nobreakdash-grading of $K\<X>$ defined by $A$ to a $\ZZ^k$-grading of $K\<X,T,T^{-1}>$ by setting 
\begin{align*}
	\deg_A(x_i) = a_i, \quad \deg_A(t_j) = -e_j,\quad \deg_A(t_j^{-1}) = e_j,
 \end{align*}
 for $i = 1,\dots,n$, $j = 1,\dots,k$, where $e_j$ denotes the $j$-th unit vector of $\ZZ^k$.
 Then, the ideal $J$ is homogeneous w.r.t.\ this grading. 
 Consequently, this induces a well-defined grading on $\mathcal{A}$ w.r.t.\ which the ideal $I^{\overline \varphi_A}$ is homogeneous.
To see that also the intersection $I^{\overline \varphi_A} \cap K\<X>$ is homogeneous, let $f \in I^{\overline \varphi_A} \cap K\<X>$.
Since $I^{\overline \varphi_A}$ is homogeneous, the homogeneous components of $\overline f$ are in $I^{\overline \varphi_A}$.
As the normal forms of these homogeneous components do not depend on $T$ or $T^{-1}$, they are consequently also contained in $I^{\overline \varphi_A}\cap K\<X>$.
This shows that the intersection is also a homogeneous ideal.
\end{proof}

To turn Theorem~\ref{thm homogeneous part} into an effective procedure, we can make use of the fact that there is a bijection between ideals in $K\<X,T,T^{-1}> / J$ and superideals of $J$ in $K\<X,T,T^{-1}>$.
Using this, we can obtain a Gr\"obner basis of $\hom_A(I)$ as described in Procedure~\ref{proc:hom}.

\begin{algorithm}
  \caption{Gr\"obner basis enumeration of $\hom_A(I)$}
  \label{proc:hom}
  \begin{algorithmic}[1]
    \REQUIRE a generating set $F\subseteq K\<X>$ of an ideal $I$, a degree matrix $A \in \ZZ^{n \times k}$
    \ENSURE $G\subseteq I$ a Gr\"obner basis of $\hom_A(I)$
    \STATE $T \leftarrow \{t_1,\dots,t_k\}$, $T^{-1} \leftarrow \{t_1^{-1},\dots,t_k^{-1}\}$
    \STATE $J \leftarrow$ the ideal in $K\<X,T,T^{-1}>$ generated by 
    \[
	[X \cup T \cup T^{-1},T \cup T^{-1}] \cup \{1 - t_j t_j^{-1} \mid j=1,\dots,k\} \vspace{-1.3em}
    \] 
    \STATE $H \leftarrow (\varphi_A(f) \mid f \in F) \subseteq K\<X,T,T^{-1}>$
    \STATE enumerate a Gr\"obner basis $\{g_0,g_1,\dots\}$ of $H + J$ w.r.t.\ an elimination ordering for $T \cup T^{-1}$
    \STATE $G \leftarrow \{g_i \mid  i \in \NN, g_i \in K\<X>\}$
     \RETURN{$G$}
  \end{algorithmic}
\end{algorithm}

\begin{remark}
If $A \in \NN^{n\times k}$, then the ideal $J$ can be replaced by 
\begin{align*}
	J' = \left([X \cup T,T] \cup \{1- t_j t_j^{-1} \mid j = 1,\dots,k\}\right).
\end{align*}
\end{remark}

To end this section, we discuss how computing the homogeneous part of an ideal $I \subseteq K\<X,X^*>$ allows to find elements of the form $p - q^* q \in I$,
where $p \in K\<X,X^*>$ is given and $q$ is unknown.
In case that $q$ is a monomial, we can choose a grading of $K\<X,X^*>$ that makes elements of the form $p - q^* q$ homogeneous.
Taking the matrix $A$ such that
$\deg_A(x_i) = e_i$ and $\deg_A(x_i^*) = -e_i$, where $e_i$ denotes the $i$-th unit vector of $\ZZ^n$ for all $i = 1,\dots,n$, yields $\deg_A(q^*q) = 0$ for all $q\in \<X,X^*>$. 
Now, if there exists an element of the form $p - q^* q \in I$ and if we introduce a new variable $v$ with $\deg_A(v) = 0$ and set $I' = I + (v-p)$,
then the homogeneous polynomial $v - q^*q$ of degree 0 lies in $I'$.
By enumerating generators of $\hom_A(I')$ we can systematically search for this element.
We note that we can increase our chances of finding a suitable element by computing a Gr\"obner basis of $\hom_A(I')$ w.r.t.\ an elimination ordering for $\{v\}$.
Because then, provided that $I$ contains an element of the form $p - q^* q$, this Gr\"obner basis must contain an element with leading monomial $v$.
In the following, we apply this procedure to prove Theorem~\ref{thm example 2}.

\begin{example}
Recall that Theorem~\ref{thm example 2} states that if $\mathcal{R}(B) \subseteq \mathcal{R}(A^*)$ and if the operator equation $AXB = C$ has a positive solution $X$, then the operator $B^* A^\dagger C$ is positive as well.
To prove this statement by a computation with polynomials, we have to translate all properties of the operators first into identities and then into noncommutative polynomials.
For the assumptions, the existence of a positive solution $X$ and the range inclusion translate into the identities $AY^*YB = C$ and $B = A^* Z$, respectively, with new operators $Y,Z$.
Furthermore, the existence of the Moore-Penrose inverse $A^\dagger$ is encoded by the four defining identities of $A^\dagger$.
Translating all these identities and the respective adjoint statements into noncommutative polynomials, gives a set $F \subseteq \QQ\<X,X^*>$ of 10 polynomials with integer coefficients in 12 indeterminates.
Proving Theorem~\ref{thm example 2} boils down to finding an operator $Q$ such that $B^* A^\dagger C = Q^* Q$, or in terms of polynomials, to finding a compatible polynomial
of the form $b^* a^\dagger c - q^* q \in (F)$, where $b^*, a^\dagger, c \in X \cup X^*$ are known but $q \in \QQ\<X,X^*>$, and therefore also $q^*$, are unknown.
We note that the quiver encoding the domains and codomains of the operators involved consists of 3 vertices and 12 edges representing the 12 indeterminates.
We refer to the \textsc{Mathematica} notebook accompanying this paper for a visualization of the quiver and for further information on the computations. 

Following the procedure outlined above, we consider the ideal $I' = (F) + (v - b^* a^\dagger c) \subseteq \QQ\<X,X^*,v>$ and enumerate a Gr\"obner basis of the homogeneous part $\hom_A(I')$ w.r.t.\ the degree matrix $A \in \ZZ^{13 \times 6}$ such that $\deg_A(v) = 0$,
$\deg_A(x_i) = e_i$, $\deg_A(x_i^*) = - e_i$ for all $x_i \in X$, $x_i^* \in X^*$.
We do this computation w.r.t.\ an elimination ordering for $\{v\}$.
To speed up the computation, we first compute the reduced Gr\"obner basis $G$ of $I'$ and then use this generating set as input to enumerate a Gr\"obner basis of $\hom_A(I')$. 
While $|G| = 25$, the Gr\"obner basis of $\hom_A(I')$ seems to be infinite.
However, after enumerating 273 generators of $\hom_A(I')$, one can see that $v - b^*y^*yb = v - (yb)^* yb  \in \hom_A(I')$, which shows that $b^* a^\dagger c - (yb)^* yb \in (F)$.
After verifying that this polynomial is compatible with our quiver, this reveals that $B^* A^\dagger C = (YB)^* YB$ is indeed positive, and consequently, proves Theorem~\ref{thm example 2}.
\end{example}


\section{Monomial part}\label{sec monomial part}

Other elements in an ideal that are often of special interest are monomials. 
The goal of this section is to effectively describe the \emph{monomial part} of a (right) ideal $I \subseteq K\<X>$, that is, the (right) ideal generated by all monomials contained in $I$.
More precisely, we show how a simple adaptation of the technique described in the previous section allows to compute all monomials in a given right ideal.
We note that the following definition is analogous to the commutative case (cf.\ the discussion before Algorithm~4.4.2 in~\cite{SST00}). 

\begin{definition}
Let $I \subseteq K\<X>$ be an ideal.
Then, we denote by $\mon(I) \coloneqq (m \in I \mid m \in \<X> )$ the \emph{monomial part} of $I$.
Analogously, for a right ideal $I_\rho \subseteq K\<X>$, we denote by $\mon(I_\rho) \coloneqq (m \in I \mid m \in \<X> )_\rho$ the \emph{monomial part} of $I_\rho$.
\end{definition}

This is clearly a monomial (right) ideal as defined below.

\begin{definition}
Let $I \subseteq K\<X>$ be a (right) ideal.
Then, $I$ is called a \emph{monomial (right) ideal} if $I$ has a (right) generating set consisting only of monomials. 
\end{definition}

Monomial ideals can also be characterized in a different way.
To this end, we recall that the \emph{support} of a polynomial $f \in K\<X>$, denoted by $\supp(f)$, is the set consisting of all monomials that appear in $f$.
Then, a (right) ideal $I \subseteq K\<X>$ is a monomial (right) ideal if and only if, with every element $f \in I$, also $\supp(f) \subseteq I$.
The next lemma follows directly from the definition above.

\begin{lemma}\label{lemma monomial ideal}
Let $I \subseteq K\<X>$ be a (right) ideal.
Then, $\mon(I)$ is the largest monomial (right) ideal contained in $I$.
\end{lemma}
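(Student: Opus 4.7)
The plan is to prove the two defining properties of ``largest monomial (right) ideal contained in $I$'' separately: first that $\mon(I)$ is itself a monomial (right) ideal contained in $I$, and then that any other monomial (right) ideal contained in $I$ is in turn contained in $\mon(I)$.

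First I would unfold the definition of $\mon(I)$. By construction it is the (right) ideal generated by the set $S \coloneqq \{m \in I \mid m \in \<X>\}$, which is a set of monomials. Hence $\mon(I)$ has a (right) generating set consisting only of monomials, so it is a monomial (right) ideal in the sense of the preceding definition. Since every element of $S$ lies in $I$ and $I$ is itself a (right) ideal, the (right) ideal generated by $S$ is contained in $I$; that is, $\mon(I) \subseteq I$.

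For maximality, I would pick an arbitrary monomial (right) ideal $J \subseteq I$. By definition $J$ admits a (right) generating set $M \subseteq \<X>$ of monomials. Each $m \in M$ lies in $J \subseteq I$ and is a monomial, so $M \subseteq S$. Since $\mon(I)$ is the (right) ideal generated by $S$ and $M \subseteq S \subseteq \mon(I)$, it follows that $J \subseteq \mon(I)$. This shows $\mon(I)$ contains every monomial (right) ideal of $I$, hence is the largest such.

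There is no real obstacle here: both parts reduce to the observation that a monomial (right) ideal is determined, as a (right) ideal, by the monomials it contains, together with the trivial fact that generators of $J$ which are monomials in $I$ are by definition generators of $\mon(I)$. The only bookkeeping point is to make sure the argument works verbatim in the two-sided and one-sided settings, which is why I would phrase ``(right)'' uniformly throughout and rely only on the closure properties shared by ideals and right ideals.
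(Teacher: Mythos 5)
Your argument is correct and is exactly the routine unfolding of definitions that the paper has in mind when it states, without further proof, that the lemma ``follows directly from the definition above.'' There is nothing more to add.
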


In the following, we describe how an adaptation of the technique from the previous section allows to compute a right Gr\"obner basis of $\mon(I_\rho)$ for a given right ideal $I_\rho \subseteq K\<X>$.
As for the homogeneous part of an ideal, we need some notation.
First, we define the ideal 
\begin{align}\label{quotient ideal}
	J = \left([X,T] \cup \{1 - t_j t_j^{-1} \mid j = 1,\dots, n\}\right),
\end{align}
where $T = \{t_1,\dots,t_n\}$ and $T^{-1} = \{t_1^{-1},\dots,t_n^{-1}\}$ are new indeterminates.
Then, we consider the quotient algebra $\mathcal{B} =  K\<X,T,T^{-1}>/ J$.
We again denote the equivalence classes of elements $f \in K\<X,T,T^{-1}>$ in $\mathcal{B}$ by $\overline f$.
In $\mathcal{B}$, the $\overline t_j$ commute with all $\overline x_i$ but not with each other.
The fact that the $\overline t_j$ do not commute with each other is the main and crucial difference compared to the previous section.
Furthermore, each $\overline t_j$ can also be cancelled from the right.

Additionally, we define the $K$-algebra homomorphism 
\begin{align}\label{def phi monomial}
	\varphi: K\<X> \to K\<X,T,T^{-1}>, \quad x_i \mapsto x_i t_i.
\end{align}
The map $\varphi$ can be considered as a special case of the map $\varphi_A$ defined in the previous section by setting $A = I_n$.
As before, $\overline \varphi$ denotes the composition of $\varphi$ with the canonical epimorphism.
Applying $\overline \varphi$ to a polynomial $f$ maps each monomial $m\in \supp(f)$ to the equivalence class $\overline{mt_m}$, where $t_m$ is a copy of $m$ but in the variables $t_1,\dots,t_n$.

Next, for a right ideal $I_\rho \subseteq K\<X>$ with right generating set $F$, we consider the extension of $I_\rho$ along the homomorphism $\overline \varphi$, which we denote by
\begin{align}\label{def tI}
	I_\rho^{\overline \varphi} \coloneqq (\overline \varphi(f) \mid f \in F)_\rho \subseteq \mathcal{B}.
\end{align}
Note that, as in the previous section, this definition is independent of the generating set $F$.
However, in contrast to before,  $I_\rho^{\overline \varphi}$ is now a right ideal.
The main result of this section is the following theorem.

\begin{theorem}\label{thm monomials in ideal}
Let $I_\rho \subseteq K\<X>$ be a right ideal. 
Furthermore, let $I_\rho^{\overline \varphi}$ be the extension of $I_\rho$ along $\overline \varphi$. 
Then,
\begin{align*}	
	\mon(I_\rho) = I_\rho^{\overline \varphi} \cap K\<X> \coloneqq \bigcup_{\overline f \,\in\, I_\rho^{\overline \varphi}} \overline f \cap K\<X>.
\end{align*}
\end{theorem}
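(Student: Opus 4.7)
The plan is to adapt the proof of Theorem~\ref{thm homogeneous part}, exploiting the fact that the tags $t_m := t_{i_1} \cdots t_{i_k}$ attached by $\overline\varphi$ to a monomial $m = x_{i_1} \cdots x_{i_k}$ are \emph{noncommutative} words in the monoid $G = \<T, T^{-1}>/(t_j t_j^{-1} = 1)$, so that distinct monomials receive distinct, distinguishable tags. I would prove the two inclusions separately and, for the harder one, establish (a) $I_\rho^{\overline\varphi} \cap K\<X> \subseteq I_\rho$ and (b) $I_\rho^{\overline\varphi} \cap K\<X>$ is a monomial right ideal; the conclusion then follows from Lemma~\ref{lemma monomial ideal} exactly as Lemma~\ref{lemma homogeneous ideal} was used in the previous section.

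For the inclusion $\mon(I_\rho) \subseteq I_\rho^{\overline\varphi} \cap K\<X>$ the argument is constructive. Given a monomial $m = x_{i_1} \cdots x_{i_k} \in I_\rho$ one has $\overline\varphi(m) = \overline{m\,t_m} \in I_\rho^{\overline\varphi}$. Setting $t_m^{-1} := t_{i_k}^{-1} \cdots t_{i_1}^{-1}$ and applying the relations $t_j t_j^{-1} = 1$ iteratively from the inside out yields $\overline{m\,t_m} \cdot \overline{t_m^{-1}} = \overline m$, so $\overline m \in I_\rho^{\overline\varphi}$ and hence $m \in I_\rho^{\overline\varphi} \cap K\<X>$. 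A routine check shows that this intersection is itself a right ideal of $K\<X>$, so it contains the right ideal $\mon(I_\rho)$ generated by such monomials.

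For the reverse inclusion, part (a) proceeds exactly as in the proof of Theorem~\ref{thm homogeneous part}: lift a representation $\overline f = \sum_i \overline\varphi(f_i)\,\overline{q_i}$ to $f = \sum_i \varphi(f_i)\,q_i + j$ with $j \in J$ and apply the $K$-algebra homomorphism $K\<X, T, T^{-1}> \to K\<X>$ sending $t_j, t_j^{-1} \mapsto 1$ and $x_i \mapsto x_i$. Both generators of $J$ lie in its kernel and each $\varphi(f_i)$ is mapped to $f_i$, yielding $f = \sum_i f_i\,\tilde q_i \in I_\rho$.

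The heart of the proof, and the main obstacle, is (b). In contrast to Theorem~\ref{thm homogeneous part}, the right ideal $I_\rho^{\overline\varphi}$ is not itself homogeneous under the natural grading that separates individual monomials: under the $M = \<X> \times G$ grading with $\deg(x_i) = (x_i, 1)$, $\deg(t_j) = (1, t_j)$, $\deg(t_j^{-1}) = (1, t_j^{-1})$, each generator $\overline\varphi(f_i) = \sum_{m'} c_{i, m'}\,\overline{m'\,t_{m'}}$ decomposes into terms of pairwise distinct $M$-degrees $(m', t_{m'})$. I would therefore argue directly: after checking that $J$ is $M$-homogeneous (so the grading descends to $\mathcal{B}$) and establishing the key structural lemma $\mathcal{B}_{(m,1)} = K \cdot \overline m$ for every $m \in \<X>$---proved by using $x_i t_j = t_j x_i$ to bring the matched $t_j t_j^{-1}$ pairs of a $G$-trivial $t$-content adjacent and then cancelling them---one extracts the $(m, 1)$-component of $\overline f = \sum_i \overline\varphi(f_i)\,\overline{q_i}$ for each $m \in \supp(f)$. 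The constraint $\overline f \in K\<X>$ forces all $(\mu, g)$-components of the right-hand side with $g \neq 1$ to cancel simultaneously, and combining this cancellation with the right-ideal closure of $I_\rho^{\overline\varphi}$ one concludes $\overline m \in I_\rho^{\overline\varphi}$; by (a) this gives $m \in I_\rho$ for each $m \in \supp(f)$, whence $f \in \mon(I_\rho)$.
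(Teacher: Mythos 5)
Your first inclusion $\mon(I_\rho)\subseteq I_\rho^{\overline\varphi}\cap K\<X>$ and your part~(a) match the paper's Lemma~\ref{proof monomial in ideal 1} word for word, and the overall strategy --- two inclusions for the hard direction plus Lemma~\ref{lemma monomial ideal} --- is also the paper's. The problem is part~(b), which is exactly where the paper introduces its new idea, and your sketch has a genuine gap there.

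You set up an honest monoid grading of $K\<X,T,T^{-1}>$ by $M=\<X>\times G$, correctly check that $J$ is $M$-homogeneous so the grading descends to $\mathcal{B}$, and propose the structural lemma $\mathcal{B}_{(m,1)}=K\overline{m}$. That lemma is plausible, but it does not close the argument, because --- as you yourself observe --- the generators $\overline\varphi(f_i)$ decompose into pieces of \emph{pairwise distinct} $M$-degrees $(m',t_{m'})$, so $I_\rho^{\overline\varphi}$ is not $M$-homogeneous. Consequently the $(m,1)$-component of $\sum_i\overline\varphi(f_i)\,\overline{q_i}$ is a sum, over prefixes $m'$ of $m$, of terms $c_{i,m'}\,\overline{m'\,t_{m'}}\cdot\overline{q_{i,\beta(m')}}$; this does not refactor as $\sum_i\overline\varphi(f_i)\cdot(\text{something})$ and therefore has no reason to lie in $I_\rho^{\overline\varphi}$. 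The phrase ``combining this cancellation with the right-ideal closure of $I_\rho^{\overline\varphi}$'' does not supply the missing step, and no honest monoid grading can, precisely because any homomorphism $\<X,T,T^{-1}>\to M$ necessarily spreads the monomials $m\,t_m$ over infinitely many degrees.

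The paper's resolution is to abandon genuine gradings. It defines $\sdeg\colon\<X,T,T^{-1}>\to\mathcal{F}_n$ into the \emph{free group} by $\sdeg(w)=\mu(w)^{-1}\lambda(w)$; this is deliberately not a monoid homomorphism (already $\sdeg(t_1t_2)=g_2^{-1}g_1^{-1}\neq g_1^{-1}g_2^{-1}=\sdeg(t_1)\sdeg(t_2)$), and the resulting decomposition is only a ``separating pseudograding'' in the sense of Definition~\ref{def s-grading}, satisfying the one-sided condition $S_1S_\alpha\subseteq S_\alpha$ plus injectivity of $\sdeg$ on $\<X>$. The point of this construction is that $\sdeg(m\,t_m)=\mu(t_m)^{-1}\lambda(m)=1$ for \emph{every} $m\in\<X>$, so every $\overline\varphi(f_i)$ \emph{is} s-homogeneous of s-degree~$1$, making $I_\rho^{\overline\varphi}$ an s-homogeneous right ideal generated in s-degree~$1$. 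Proposition~\ref{prop s-grading} (the pseudograded substitute for ``homogeneous components of an ideal element lie in the ideal'', which only uses $S_1S_\alpha\subseteq S_\alpha$ and hence is compatible with the right-ideal structure) then yields $\overline{c_im_i}\in I_\rho^{\overline\varphi}$. To repair your part~(b) you would have to replace $M$ by something that sends every $m\,t_m$ to a single degree while still separating distinct $m\in\<X>$ --- which is exactly what forces a non-homomorphic map like $\sdeg$ and the pseudograding machinery the paper builds around it.
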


Before we proceed to prove this theorem, we describe in Procedure~\ref{proc:mon} how a right Gr\"obner basis of $\mon(I_\rho)$ can be obtained. 

\begin{remark}
Procedure~\ref{proc:mon} requires the computation of a right Gr\"obner basis of the right ideal $I_\rho^{\overline \varphi}$ in the quotient algebra $\mathcal{B} = K\<X,T,T^{-1}> / J$.
As described in~\cite{Hey01}, this can be done effectively by introducing a new tag variable $y$ and computing the Gr\"obner basis of a two-sided ideal containing $J$ in
the free algebra $K\<X,T,T^{-1},y>$.
This is what is done in line~\ref{line:mon1} -- \ref{line:mon2} of Procedure~\ref{proc:mon}.
We note that there are also other ways to compute a Gr\"obner basis of a one\nobreakdash-sided ideal in a quotient algebra (see e.g.~\cite[Section~6.1.2]{Xiu12}),
however the approach described above has the advantage that it can be realized with a standard two-sided Gr\"obner basis implementation without any additional adaptations. 
\end{remark}

\begin{algorithm}
  \caption{Gr\"obner basis enumeration of $\mon(I_\rho)$}
  \label{proc:mon}
  \begin{algorithmic}[1]
    \REQUIRE a right generating set $F\subseteq K\<X>$ of a right ideal $I_\rho$
    \ENSURE $G\subseteq I_\rho$ a right Gr\"obner basis of $\mon(I_\rho)$
    \STATE $T \leftarrow \{t_1,\dots,t_n\}$, $T^{-1} \leftarrow \{t_1^{-1},\dots,t_n^{-1}\}$
    \STATE $J \leftarrow$ the ideal in $K\<X,T,T^{-1},y>$ generated by 
    \[
	 [X,T] \cup \{1 - t_j t_j^{-1} \mid j = 1,\dots, n\} \vspace{-1.3em}
    \] 
    \STATE $H \leftarrow (y \cdot \varphi(f) \mid f \in F) \subseteq K\<X,T,T^{-1},y>$ \label{line:mon1}
    \STATE enumerate a Gr\"obner basis $\{g_0,g_1,\dots\}$ of $H + J$ w.r.t.\ an elimination ordering for $T \cup T^{-1}$
    \STATE $G' \leftarrow \{g \mid y g = g_i \text{ for some } i \in \NN \}$ \label{line:mon2} 
    \STATE $G \leftarrow G' \cap K\<X>$
     \RETURN{$G$}
  \end{algorithmic}
\end{algorithm}

The proof of  the commutative analog of Theorem~\ref{thm monomials in ideal} in \cite{Mil16} relies on the fact that a certain ideal is homogeneous w.r.t.~a certain matrix grading and that an ideal can only be homogeneous w.r.t.~this grading if it is a monomial ideal.
Unfortunately, this argument does not immediately carry over to the case of noncommutative polynomials because no matter which matrix grading we choose, monomials that are permutations of each other will always have the same degree. 
For example, the right ideal $(x_1 x_2 -x_2 x_1)_\rho$ is homogeneous w.r.t.\ any matrix grading but it is clearly not a monomial ideal. 
In order to prove Theorem~\ref{thm monomials in ideal}, we introduce the notion of a \emph{separating pseudograding} for a subset $S \subseteq R$ of a ring $R$.

\begin{definition}\label{def s-grading}
Let $R$ be a ring and let $(N,\cdot)$ be a monoid.
Furthermore, let $S \subseteq R$. 
We call a decomposition
\begin{align*}
	R = \bigoplus_{\alpha \in N} S_\alpha,
\end{align*}
of $R$ into a direct sum of abelian groups a \emph{separating (right) $N$-pseudograding} for $S$ if the following conditions hold:
\begin{enumerate}
	\item $S_1 S_\alpha \subseteq S_\alpha$ for all $\alpha \in N$;
	\item $|S \cap S_\alpha | = 1$ for all $\alpha \in N$;
\end{enumerate}
\end{definition}

Intuitively speaking, a separating $N$-pseudograding for $S$ allows us to separate the elements in $S$ as they all have to lie in different subgroups $S_\alpha$.
As in the case of usual gradings, we omit the information about the monoid $N$ in a separating $N$-pseudograding if $N$ is clear from the context.

The following example shows that the multidegree yields a separating pseudograding for commutative monomials.

\begin{example}
Let $R = K[X]$ be the usual polynomial ring in $X = \{x_1,\dots,x_n\}$.
Consider the monoid $(\NN^n, +)$ and let $S = [X]$.
For $\alpha = (\alpha_1,\dots,\alpha_n) \in \NN^n$, we denote $x^\alpha \coloneqq x_1^{\alpha_1}\dots x_n^{\alpha_n} \in [X]$.
The grading of $R$ induced by the multidegree $\deg(x^\alpha) = \alpha$ is a separating $\NN^n$-pseudograding for $S$.
This follows from the fact that for all $\alpha = (\alpha_1,\dots,\alpha_n) \in \NN^n$, the set $S_\alpha$ is given by $S_\alpha = \{c x^\alpha \mid c \in K \setminus \{0\}\}$. 
\end{example}

For the following definition and proposition, we fix a ring $R$ and a monoid $(N,\cdot)$.
Given a separating ($N$-)pseudograding for a set $S \subseteq R$, we can define \emph{separable homogeneous} elements, or short \emph{s-homogeneous} elements, and, based on this, \emph{separable homogeneous} ideals, or short \emph{s-homogeneous} ideals.

\begin{definition}
Let $S \subseteq R$ and let $R = \bigoplus_{\alpha \in N} S_\alpha$ be a separating pseudograding for $S$.
An element $r\in R$ is called \emph{separable homogeneous}, or short \emph{s-homogeneous}, if $r \in S_\alpha$ for some $\alpha \in N$.
If $r \neq 0$, then we call $\alpha$ the \emph{s-degree} of $r$, denoted by $\sdeg(r) = \alpha$.
An ideal $I \subseteq R$ is called \emph{separable homogeneous}, or short \emph{s-homogeneous}, if $I$ is generated by s-homogeneous elements.
\end{definition}

The following proposition is the crucial observation about separating pseudogradings.
It is a weaker version of the property that, with an element $r$, a homogeneous ideal also contains all homogeneous components of $r$.

\begin{proposition}\label{prop s-grading}
Let $S \subseteq R$ and $R = \bigoplus_{\alpha \in N} S_\alpha$ be a separating pseudo\-grading for $S$.
Furthermore, let $I_\rho \subseteq R$ be an s-homogeneous right ideal generated by s\nobreakdash-homogeneous elements of s-degree 1.
If $r = c_1 s_1 + \dots + c_d s_d \in I_\rho$ with $s_1,\dots,s_d \in S$ pairwise different and $c_1,\dots,c_d \in S_1$, then $c_1s_1,\dots, c_ds_d \in I_\rho$.
\end{proposition}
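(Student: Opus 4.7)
The plan is to compare s-homogeneous components of two different representations of $r$. First I would note that each $s_i\in S$ lies in a unique component $S_{\alpha_i}$ of the pseudograding: any element of $S$ must be s-homogeneous, for otherwise its nontrivial direct-sum decomposition would place more than one element of $S$ in some $S_\alpha$, contradicting condition~(2). Since the $s_i$ are pairwise distinct, the same condition forces the indices $\alpha_1,\dots,\alpha_d$ to be pairwise distinct as well. Using condition~(1), each $c_is_i\in S_1\cdot S_{\alpha_i}\subseteq S_{\alpha_i}$, so the given expression $r=c_1s_1+\dots+c_ds_d$ already displays $r$ as the sum of its nonzero s-homogeneous components with respect to the decomposition $R=\bigoplus_\alpha S_\alpha$.

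Next, I would exploit the hypothesis that $I_\rho$ is a right ideal generated by s-homogeneous elements of s-degree~$1$. Since $r\in I_\rho$, one can write $r=\sum_j g_jh_j$ with each generator $g_j\in S_1$ and $h_j\in R$. Decomposing $h_j=\sum_\alpha h_{j,\alpha}$ via the pseudograding and applying condition~(1) once more yields
\[
g_jh_j=\sum_\alpha g_jh_{j,\alpha}, \qquad g_jh_{j,\alpha}\in S_1\cdot S_\alpha\subseteq S_\alpha,
\]
so that
\[
r=\sum_\alpha\Bigl(\sum_j g_jh_{j,\alpha}\Bigr)
\]
is a second expression of $r$ as a sum of s-homogeneous components.

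Finally, I would invoke the uniqueness of the direct-sum decomposition to match the two expressions componentwise: the $S_{\alpha_i}$-component of $r$ equals $c_is_i$ on one hand and $\sum_j g_jh_{j,\alpha_i}$ on the other. The latter is a right-linear combination of the generators of $I_\rho$, hence lies in $I_\rho$; therefore $c_is_i\in I_\rho$ for every $i=1,\dots,d$.

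The main conceptual point -- and the only place where the hypotheses are used in an essential way -- is that condition~(1) only asserts $S_1\cdot S_\alpha\subseteq S_\alpha$, which is exactly what is needed for left multiplication by an s-degree-$1$ element to respect the decomposition. This explains why the proposition restricts the generators of $I_\rho$ to s-degree~$1$: without this restriction, nothing is known about $S_\beta\cdot S_\alpha$ for $\beta\neq 1$, and the componentwise extraction in the second step would break down.
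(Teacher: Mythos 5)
Your proof follows the same route as the paper's: decompose both the given expression $r=\sum_i c_is_i$ and a representation $r=\sum_j g_jh_j$ in terms of s-degree-1 generators along the pseudograding, and match components; the paper merely phrases this as an induction extracting one component $c_1s_1$ at a time. One small caveat: your justification that every $s\in S$ is s-homogeneous is not quite right --- if $s=\sum_\beta s_\beta$ were a nontrivial decomposition, the components $s_\beta$ need not themselves lie in $S$, so condition~(2) is not directly violated; the paper also takes this point for granted (it is built into the intended reading of the definition, where $\alpha\mapsto S\cap S_\alpha$ sets up a bijection $N\to S$), so the proof stands, but the specific contradiction you offer does not actually go through.
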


\begin{proof}
W.l.o.g.\ we can assume that $c_i s_i \neq 0$ for all $i = 1,\dots,d$.
We note that it suffices to show that $c_1s_1 \in I_\rho$, for then also $r - c_1s_1 \in I_\rho$ and the statement follows by induction.
We denote by $F \subseteq S_1$ a right generating set of $I_\rho$ consisting of s-homogeneous elements of s-degree 1.
Then, with $b_f \in R$ and $b_{f,\alpha} \in S_\alpha$ such that $b_f = \sum_{\alpha \in N} b_{f,\alpha}$, we can write $r$ as 
\begin{align*}
	r = \sum_{f \in F} f b_f = \sum_{f \in F} f \sum_{\alpha \in N} b_{f,\alpha} = \sum_{f \in F} \sum_{\alpha \in N} f b_{f,\alpha}.
\end{align*}
As all $f \in S_1$, it follows from the definition of a separating pseudograding that $f b_{f,\alpha} \in S_1 S_\alpha \subseteq S_\alpha$.
Furthermore, by the definition of a separating pseudograding for $S$ there exist pairwise different $\alpha_1,\dots,\alpha_d \in N$ such that $S \cap S_{\alpha_i} = \{s_i\}$ for all $i = 1,\dots,d$.
Then, also $c_i s_i \in S_1 S_{\alpha_i} \subseteq S_{\alpha_i}$ for all $i = 1,\dots,d$. 
Since the subgroups $S_{\alpha_i}$ have trivial intersection with each other, the nonzero elements $c_2s_2, \dots, c_d s_d$ cannot lie in $S_{\alpha_1}$. 
Therefore, by comparing s-degrees, one can see that $c_1s_1 = \sum_{f \in F} f b_{f,\alpha_1} \in I_\rho$.
\end{proof}

Coming back to noncommutative polynomials, our next goal is to define a separating pseudograding for $\<X>$ in $K\<X,T,T^{-1}>$.
To this end, we denote by $\mathcal{F}_n$ the free group of rank $n$ generated by $g_1,\dots,g_n$.
This allows us to consider the two monoid homomorphisms
\begin{align*}
	\lambda&: \<X,T,T^{-1}> \to \mathcal{F}_n, &&x_i \mapsto g_i, &&t_i \mapsto 1, &&t_i^{-1} \mapsto 1, \\
	\mu&: \<X,T,T^{-1}> \to \mathcal{F}_n,  &&x_i \mapsto 1, &&t_i \mapsto g_i, &&t_i^{-1} \mapsto g_i^{-1}. 
\end{align*}
Using $\lambda$ and $\mu$, we define the map $\sdeg: \<X,T,T^{-1}> \to \mathcal{F}_n$ that sends each $m \in \<X,T,T^{-1}>$ to $\sdeg(m) \coloneqq \mu(m)^{-1} \lambda(m) \in \mathcal{F}_n$.

\begin{example}
Let $X = \{x_1,x_2\}$, $T = \{t_1,t_2\}$ and $T^{-1} = \{t_1^{-1},t_2^{-1}\}$.
Then, for $m = x_1 t_1 x_2 t_2 \in \<X,T,T^{-1}>$, we have $\sdeg(m) = (g_1 g_2 )^{-1} g_1g_2 = 1$.
Analogously, taking $m' = t_1 t_2^{-1} x_1 t_1^{-1} x_2 \in \<X,T,T^{-1}>$, yields $\sdeg(m') = (g_1 g_2^{-1} g_1^{-1})^{-1} g_1 g_2 = g_1 g_2 g_2$.
\end{example}

The following lemma captures some important properties of the map $\sdeg$.

\begin{lemma}\label{lemma properties sdeg}
Let $m, m' \in \<X,T,T^{-1}>$, $x_i \in X$, $t_j \in T$ and $t_j^{-1} \in T^{-1}$. 
Then, the following hold.
\begin{enumerate}
	\item $\sdeg(mx_it_jm') = \sdeg(mt_jx_im')$;
	\item $\sdeg(mt_jt_j^{-1}m') = \sdeg(mm')$;
	\item $\sdeg(m) = 1 \implies \sdeg(mm') = \sdeg(m')$;
\end{enumerate}
\end{lemma}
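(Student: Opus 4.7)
The plan is to reduce each of the three properties to a direct computation using that $\lambda$ and $\mu$ are monoid homomorphisms and that $\sdeg(m) = \mu(m)^{-1}\lambda(m)$. In every case the key observation is that either the building blocks being rearranged contribute trivially to one of $\lambda,\mu$, or else the relation being exploited holds in the free group $\mathcal{F}_n$.

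For part (1), I would compute $\lambda$ and $\mu$ separately on each side. Since $\lambda(t_j) = 1$ and $\mu(x_i) = 1$, splitting $mx_it_jm'$ and $mt_jx_im'$ with the homomorphism property gives $\lambda(m)g_i\lambda(m')$ and $\mu(m)g_j\mu(m')$ in both cases, so the two sdegs coincide. For part (2), again using the homomorphism property, the factor $t_jt_j^{-1}$ contributes $\mu(t_j)\mu(t_j^{-1}) = g_jg_j^{-1} = 1$ to $\mu$ and $\lambda(t_j)\lambda(t_j^{-1}) = 1$ to $\lambda$, so it can be deleted from the word without affecting either image, hence without affecting $\sdeg$. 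Thus the two sides equal $\sdeg(mm')$.

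For part (3), I would expand
\begin{equation*}
\sdeg(mm') = \mu(mm')^{-1}\lambda(mm') = \mu(m')^{-1}\bigl(\mu(m)^{-1}\lambda(m)\bigr)\lambda(m').
\end{equation*}
The assumption $\sdeg(m) = 1$ means exactly that the inner bracket equals $1$ in $\mathcal{F}_n$, so the product collapses to $\mu(m')^{-1}\lambda(m') = \sdeg(m')$.

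There is no real obstacle here: each part is a one-line manipulation of group elements in $\mathcal{F}_n$, and the only care required is to track the contributions of $X$, $T$, and $T^{-1}$ to $\lambda$ and $\mu$ correctly under the homomorphism property. The mild subtlety is that $\sdeg$ is \emph{not} itself a monoid homomorphism (since $\mathcal{F}_n$ is non-abelian), which is why part (3) needs the hypothesis $\sdeg(m)=1$; this is what makes the middle factor cancel cleanly.
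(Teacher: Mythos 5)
Your proof is correct and follows the paper's approach exactly: parts (1) and (2) are direct computations from the homomorphism property of $\lambda$ and $\mu$ (which the paper compresses to ``follow immediately from the definition''), and your part (3) is verbatim the paper's calculation. Your closing remark that $\sdeg$ fails to be a homomorphism because $\mathcal{F}_n$ is non-abelian, which is why the hypothesis $\sdeg(m)=1$ is needed, is a nice point the paper leaves implicit.
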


\begin{proof}
The properties 1 and 2 follow immediately from the definition.
To prove the third property, we assume $\sdeg(m) = 1$.
This yields
\begin{align*}
	\sdeg(mm') &=  \mu(mm')^{-1} \lambda(mm') = \mu(m')^{-1} \mu(m)^{-1} \lambda(m) \lambda(m') \\
	&= \mu(m')^{-1} \sdeg(m) \lambda(m') = \mu(m')^{-1} \lambda(m') = \sdeg(m').
\end{align*}
\end{proof}

The map $\sdeg$ allows us to decompose $K\<X,T,T^{-1}>$ into a direct sum of the abelian subgroups
\begin{align*}
	K\<X,T,T^{-1}>_\alpha = \left\{\sum_{m\in\<X,T,T^{-1}>} c_m m \mid c_m \in K, \sdeg(m) = \alpha \text{ if} c_m \neq 0\right\},
\end{align*}
for $\alpha \in \mathcal{F}_n$. 
In particular, the following proposition tells us that this decomposition forms a separating $\mathcal{F}_n$-pseudograding for $\<X>$, and hence, justifies the name $\sdeg$. 

\begin{proposition}\label{prop sdeg}
The abelian subgroups $K\<X,T,T^{-1}>_\alpha$ with $\alpha \in \mathcal{F}_n$ as defined above form a separating $\mathcal{F}_n$-pseudograding for $\<X>$.
\end{proposition}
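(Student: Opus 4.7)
The plan is to verify, in order, the three requirements of Definition~\ref{def s-grading}: the direct-sum decomposition, the condition $S_1 S_\alpha \subseteq S_\alpha$, and the separation condition on $\<X>$. All three will follow essentially from the definition of $\sdeg$ together with Lemma~\ref{lemma properties sdeg}.

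First I would establish the decomposition $K\<X,T,T^{-1}> = \bigoplus_{\alpha \in \mathcal{F}_n} K\<X,T,T^{-1}>_\alpha$. Since every monomial $m \in \<X,T,T^{-1}>$ has a well-defined image $\sdeg(m) \in \mathcal{F}_n$, any polynomial $f$ can be written uniquely as a finite sum of its components obtained by grouping together monomials of equal s-degree. Each $K\<X,T,T^{-1}>_\alpha$ is evidently an abelian subgroup, and the uniqueness of the grouping gives that the sum is direct.

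Next I would verify condition~(1). Taking monomials $m, m'$ with $\sdeg(m) = 1$ and $\sdeg(m') = \alpha$, Lemma~\ref{lemma properties sdeg}(3) yields $\sdeg(mm') = \sdeg(m') = \alpha$. Extending $K$-bilinearly gives the containment $K\<X,T,T^{-1}>_1 \cdot K\<X,T,T^{-1}>_\alpha \subseteq K\<X,T,T^{-1}>_\alpha$ as required.

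For condition~(2), the key observation is that for $m \in \<X>$ one has $\mu(m) = 1$, so $\sdeg(m) = \lambda(m)$. The restriction $\lambda|_{\<X>} \colon \<X> \to \mathcal{F}_n$ is the monoid homomorphism sending $x_{i_1}\cdots x_{i_k}$ to $g_{i_1} \cdots g_{i_k}$. Because words in the positive generators $g_1, \dots, g_n$ are already reduced (no cancellation $g_i g_i^{-1}$ can occur), this map is injective. Hence distinct monomials in $\<X>$ lie in distinct subgroups $K\<X,T,T^{-1}>_\alpha$, which gives the separation property needed in the proof of Proposition~\ref{prop s-grading}, i.e.\ each $\alpha \in \mathcal{F}_n$ that is realized by some monomial of $\<X>$ is realized by exactly one such monomial.

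The only substantive step is the injectivity of $\lambda|_{\<X>}$ in the last paragraph; the rest amounts to direct bookkeeping using the homomorphism properties of $\lambda, \mu$ already set up just before the proposition.
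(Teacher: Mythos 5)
Your proposal is correct and takes essentially the same route as the paper's proof: the direct-sum decomposition by grouping monomials of equal $\sdeg$, condition~(1) from Lemma~\ref{lemma properties sdeg}(3), and condition~(2) from the fact that $\sdeg$ is injective on $\<X>$. You are in fact a bit more careful than the paper, which asserts that $\sdeg$ restricted to $\<X>$ is an ``isomorphism'' (it is only a monoid monomorphism onto its image, since $\<X>$ is not a group); your explicit injectivity argument via reduced positive words, and your observation that this is precisely what is used in the proof of Proposition~\ref{prop s-grading}, pins down the relevant content cleanly.
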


\begin{proof}
It is clear that the subgroups $K\<X,T,T^{-1}>_\alpha$ decompose $K\<X,T,T^{-1}>$ into a direct sum. 
Furthermore, the first condition of Definition~\ref{def s-grading} follows immediately from Lemma~\ref{lemma properties sdeg}.
Finally, we note that $\sdeg$ restricted to $\<X>$ is an isomorphism. 
Hence, $\sdeg(m) \neq \sdeg(m')$ for all $m \neq m' \in \<X>$.
This implies the second property of a separating $\mathcal{F}_n$-pseudograding.
\end{proof}

It follows from the first two properties in Lemma~\ref{lemma properties sdeg} that the map $\sdeg$ is invariant modulo the ideal $J$ from \eqref{quotient ideal}.
Hence, we can set  $\sdeg(\overline m) = \sdeg(m)$ for all $m \in \<X,T,T^{-1}>$ and thereby extend $\sdeg$ to the quotient algebra $\mathcal{B}$.
Consequently, the separating $\mathcal{F}_n$-pseudograding for $\<X>$ in $K\<X,T,T^{-1}>$ can be extended in a straightforward way to give a separating
$\mathcal{F}_n$\nobreakdash-pseudograding for $\< \overline X> = \{\overline x \mid x \in \<X>\}$ in $\mathcal{B}$.

Finally, we have all tools available to prove Theorem~\ref{thm monomials in ideal}.
We split this proof into two lemmas.

\begin{lemma}\label{proof monomial in ideal 1}
Let $I_\rho \subseteq K\<X>$ be a right ideal. 
Furthermore, let $I_\rho^{\overline \varphi}$ be the extension of $I_\rho$ along $\overline \varphi$. 
Then,
\begin{align*}	
	\mon(I_\rho) \;\subseteq\; I_\rho^{\overline \varphi} \cap K\<X> \;\subseteq\; I_\rho.
\end{align*}
\end{lemma}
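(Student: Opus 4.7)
The plan is to prove the two inclusions separately, in each case reducing the statement to a straightforward computation once the right homomorphism is in hand.

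For the easier inclusion $I_\rho^{\overline \varphi} \cap K\langle X\rangle \subseteq I_\rho$, I would introduce the $K$-algebra homomorphism $\sigma: K\langle X,T,T^{-1}\rangle \to K\langle X\rangle$ defined on generators by $x_i \mapsto x_i$, $t_j \mapsto 1$, $t_j^{-1} \mapsto 1$. This is well-defined because $X,T,T^{-1}$ are free generators. Then I would check the two key facts: first, $\sigma$ annihilates every generator of $J$ (both $[X,T]$ and $1 - t_j t_j^{-1}$ map to $0$), so $\sigma(J) = 0$; second, $\sigma \circ \varphi = \mathrm{id}_{K\langle X\rangle}$ since $\sigma(x_i t_i) = x_i$. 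Now if $g \in I_\rho^{\overline \varphi} \cap K\langle X\rangle$, lift $\overline g$ to an explicit representative $g = \sum_i \varphi(f_i)\, b_i + j$ with $f_i \in F$, $b_i \in K\langle X,T,T^{-1}\rangle$, and $j \in J$, and apply $\sigma$ to both sides. The left side is unchanged, and on the right $\sigma(j) = 0$ while $\sigma(\varphi(f_i) b_i) = f_i \, \sigma(b_i)$ with $\sigma(b_i) \in K\langle X\rangle$, producing the desired representation $g = \sum_i f_i \sigma(b_i) \in I_\rho$.

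For the inclusion $\mon(I_\rho) \subseteq I_\rho^{\overline \varphi} \cap K\langle X\rangle$, it suffices to show that $\overline m \in I_\rho^{\overline \varphi}$ for every monomial $m \in I_\rho$, because then for any element $\sum_i m_i r_i$ of $\mon(I_\rho)$ (with $m_i$ monomials in $I_\rho$ and $r_i \in K\langle X\rangle$), the image $\sum_i \overline{m_i}\,\overline{r_i}$ lies in the right ideal $I_\rho^{\overline \varphi}$ and its $K\langle X\rangle$-preimage is $\sum_i m_i r_i$. For a monomial $m = x_{i_1}\cdots x_{i_k} \in I_\rho$, starting from a representation $m = \sum_j f_j a_j$ with $f_j \in F$ gives $\overline\varphi(m) \in I_\rho^{\overline \varphi}$. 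Using the commutator relations $[X,T] \subseteq J$, which guarantee $\overline t_j \overline x_i = \overline x_i \overline t_j$ in $\mathcal B$, I would rewrite
\[
\overline \varphi(m) \;=\; \overline{x_{i_1} t_{i_1} x_{i_2} t_{i_2}\cdots x_{i_k} t_{i_k}} \;=\; \overline m \cdot \overline{t_{i_1} t_{i_2}\cdots t_{i_k}}.
\]
Finally, I would strip off the trailing $\overline{t_{i_\ell}}$ factors one at a time from the right: because $\overline{t_j}\,\overline{t_j^{-1}} = 1$ in $\mathcal B$ and $I_\rho^{\overline \varphi}$ is a right ideal, multiplying on the right by $\overline{t_{i_k}^{-1}}, \overline{t_{i_{k-1}}^{-1}}, \dots, \overline{t_{i_1}^{-1}}$ successively keeps us inside $I_\rho^{\overline \varphi}$ and leaves $\overline m$.

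The main obstacle, such as it is, is precisely this last step: we must rely on \emph{right}-invertibility of the $\overline{t_j}$ together with the fact that $I_\rho^{\overline \varphi}$ is a \emph{right} ideal, and on the commutation of $\overline t_j$ with $\overline x_i$ in $\mathcal B$ to collect the $t$-letters at the right end. Everything else is bookkeeping. Note that we only use the commutators $[X,T]$ and the relations $1 - t_j t_j^{-1}$ (but not commutation among the $\overline t_j$'s themselves), which is consistent with the careful choice of $J$ in \eqref{quotient ideal} that distinguishes this section from the previous one.
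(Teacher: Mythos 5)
Your proposal is correct and follows essentially the same route as the paper: for the left inclusion the paper likewise uses $\overline{m}\cdot\overline{t_{i_1}\cdots t_{i_k}t_{i_k}^{-1}\cdots t_{i_1}^{-1}}=\overline{\varphi(m)}\cdot\overline{t_{i_k}^{-1}\cdots t_{i_1}^{-1}}$ (you just strip the $\overline{t_{i_\ell}}$ off one at a time instead of all at once), and for the right inclusion the paper substitutes $t_j=t_j^{-1}=1$ into a representation of $f$, which is exactly applying your homomorphism $\sigma$. Packaging that substitution as the $K$-algebra map $\sigma$ with $\sigma(J)=0$ and $\sigma\circ\varphi=\mathrm{id}$ is a slightly cleaner formulation of the same argument.
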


\begin{proof}
We note that it follows from the definition of $I_\rho^{\overline \varphi}$ that $f \in I_\rho$ implies $\overline \varphi(f) \in I_\rho^{\overline \varphi}$.
Now, to prove the first inclusion, $\mon(I_\rho) \subseteq I_\rho^{\overline \varphi} \cap K\<X>$, let $m =  x_{i_1}\dots x_{i_k} \in \mon(I_\rho)$. 
Then, we get $\overline{mt_{i_1}\dots t_{i_k}} = \overline \varphi(m) \in I_\rho^{\overline \varphi}$, which implies
\begin{align*}
	\overline m = \overline {m} \cdot \overline{t_{i_1}\dots t_{i_k} t_{i_k}^{-1}\dots t_{i_1}^{-1}} = \overline{mt_{i_1}\dots t_{i_k}} \cdot \overline{t_{i_k}^{-1}\dots t_{i_1}^{-1}} \in I_\rho^{\overline \varphi},
\end{align*}
and consequently, $m \in  I_\rho^{\overline \varphi} \cap K\<X>$.

For the second inclusion, $I_\rho^{\overline \varphi} \cap K\<X> \subseteq I_\rho$, let $f \in I_\rho^{\overline \varphi} \cap K\<X>$.
Then, $f$ can be written as 
\[
	f = \sum_{i=1}^d \varphi(f_i) q_i + \sum_{i=1}^e g_i v_i 
\]
with $f_i \in I_\rho$, $g_i \in [X,T] \cup \{1 - t_jt_j^{-1} \mid j=1,\dots,n\}$ and $q_i, v_i \in K\<X,T,T^{-1}>$. 
Note that the left hand side of this identity does not depend on the indeterminates in $T$ and $T^{-1}$.
Hence, by setting $t_j = t_j^{-1} = 1$ for all $j = 1,\dots,n$, we obtain $f = \sum_{i=1}^d f_i \tilde q_i \in I_\rho$ with $\tilde q_i \in K\<X>$.
\end{proof}

\begin{lemma}\label{proof monomial in ideal 2}
Let $I_\rho \subseteq K\<X>$ be a right ideal. 
Furthermore, let $I_\rho^{\overline \varphi}$ be the extension of $I_\rho$ along $\overline \varphi$.
Then, $I_\rho^{\overline \varphi} \cap K\<X>$ is a monomial right ideal.
\end{lemma}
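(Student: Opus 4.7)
The plan is to apply Proposition~\ref{prop s-grading} to the separating $\mathcal{F}_n$-pseudograding for $\<\overline X>$ in $\mathcal{B}$ provided by Proposition~\ref{prop sdeg}. More precisely, I want to show that whenever $f \in K\<X>$ satisfies $\overline f \in I_\rho^{\overline \varphi}$, then $\overline m \in I_\rho^{\overline \varphi}$ for every $m \in \supp(f)$, which is exactly the characterisation of $I_\rho^{\overline \varphi} \cap K\<X>$ being a monomial right ideal recalled right after the definition of $\mon(I_\rho)$.

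First, I would verify that the generators of $I_\rho^{\overline \varphi}$ fit the hypothesis of Proposition~\ref{prop s-grading}, i.e., that they are s-homogeneous of s-degree $1$. For a monomial $m = x_{i_1}\dots x_{i_k} \in \<X>$, the image under $\varphi$ is $m\, t_{i_1}\dots t_{i_k}$, and directly from the definitions of $\lambda$ and $\mu$ one sees $\lambda(\varphi(m)) = g_{i_1}\dots g_{i_k} = \mu(\varphi(m))$, so $\sdeg(\varphi(m)) = 1$. Extending by $K$-linearity, every $\overline \varphi(f)$ is s-homogeneous of s-degree $1$, and hence $I_\rho^{\overline \varphi}$ is an s-homogeneous right ideal generated by elements of s-degree $1$.

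Next, I would take an arbitrary $f \in I_\rho^{\overline \varphi} \cap K\<X>$ and write it in its reduced form $f = c_1 m_1 + \dots + c_d m_d$ with pairwise distinct $m_i \in \<X>$ and $c_i \in K \setminus \{0\}$. The scalars $c_i$ lie in the s-degree $1$ component of $\mathcal{B}$ (the empty word has $\sdeg = 1$), and the classes $\overline m_1,\dots,\overline m_d \in \<\overline X>$ remain pairwise distinct because $\sdeg$ is injective on $\<X>$ by Proposition~\ref{prop sdeg}. Proposition~\ref{prop s-grading} then yields $c_i \overline m_i \in I_\rho^{\overline \varphi}$ for each $i$, and scaling by $c_i^{-1} \in K^\times$ (which preserves right ideal membership since right ideals are $K$-subspaces) gives $\overline m_i \in I_\rho^{\overline \varphi}$, i.e., $m_i \in I_\rho^{\overline \varphi} \cap K\<X>$. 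Hence $\supp(f)$ lies in the intersection, as required.

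The main technical point I anticipate is ensuring that the pseudograding transports cleanly from $K\<X,T,T^{-1}>$ to the quotient $\mathcal{B}$ so that scalars really do land in the s-degree $1$ piece and that the different monomials $\overline m_i$ remain s-separated. This is handled precisely by items (1) and (2) of Lemma~\ref{lemma properties sdeg}, which already show $\sdeg$ is invariant modulo the defining relations of $J$. Once this bookkeeping is in place, the rest of the argument is essentially a direct application of Proposition~\ref{prop s-grading}, and together with Lemma~\ref{proof monomial in ideal 1} it completes the proof of Theorem~\ref{thm monomials in ideal}.
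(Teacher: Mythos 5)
Your proof is correct and follows essentially the same route as the paper: identify the generators $\overline\varphi(f)$ as s-homogeneous of s-degree~$1$ with respect to the $\mathcal{F}_n$-pseudograding induced by $\sdeg$ on $\mathcal{B}$, then invoke Proposition~\ref{prop s-grading} to extract the monomial summands and scale away the coefficients. The extra details you supply (the explicit computation $\lambda(\varphi(m)) = \mu(\varphi(m))$ and the remark that $\sdeg$ descends to $\mathcal{B}$ via Lemma~\ref{lemma properties sdeg}) are exactly the justifications the paper leaves implicit.
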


\begin{proof}
Let $f = c_1m_1 + \dots + c_d m_d \in I_\rho^{\overline \varphi} \cap K\<X>$ with nonzero $c_1,\dots,c_d \in K$ and pairwise different $m_1,\dots,m_d \in \<X>$.
To show that $I_\rho^{\overline \varphi} \cap K\<X>$  is a monomial right ideal, we have to show that $m_1,\dots,m_d \in I_\rho^{\overline \varphi} \cap K\<X>$, which is equivalent to $\overline m_1,\dots,\overline m_d \in I_\rho^{\overline \varphi}$.
To show this, we fix the $\mathcal{F}_n$\nobreakdash-pseudograding for $\< \overline X>$ in $\mathcal{B}$ induced by the map $\sdeg$ and note that the image $\overline \varphi(g)$ of any nonzero $g \in K\<X>$ under the map $\overline \varphi$ defined in (\ref{def phi monomial}) is an s\nobreakdash-homogeneous polynomial of s-degree 1. 
In particular, this means that $I_\rho^{\overline \varphi}$ is an s\nobreakdash-homogeneous right ideal in $\mathcal{B}$ generated by s\nobreakdash-homogeneous elements of s\nobreakdash-degree 1.
Since $\sdeg(\overline c_i) = 1$ and $\overline m_i \in \<\overline X>$ for all $i = 1,\dots,d$, Proposition~\ref{prop s-grading} is applicable and yields that $\overline{c_1m_1},\dots,\overline{c_dm_d} \in I_\rho^{\overline \varphi}$.
But then clearly also $\overline m_1, \dots, \overline m_d \in I_\rho^{\overline \varphi}$.
\end{proof}

The proof of Theorem~\ref{thm monomials in ideal} now follows from Lemma \ref{lemma monomial ideal}, Lemma \ref{proof monomial in ideal 1} and Lemma \ref{proof monomial in ideal 2}.

\begin{remark} We make some remarks on Theorem~\ref{thm monomials in ideal}.
\begin{enumerate}
	\item Theorem~\ref{thm monomials in ideal} can also be adapted to work with left ideals.
	In this case, the ideal $J$, w.r.t.\ which the quotient is taken, has to be changed to
	\begin{align*}
		J' =  \left([X,T] \cup \{1 - t_j^{-1} t_j \mid j = 1,\dots,n\}\right).
	\end{align*}
	Then, in the quotient $K\<X,T,T^{-1}>/ J'$ each $\overline t_j$ can be cancelled from the left instead of from the right. 
	To then prove the theorem for left ideals, Definition~\ref{def s-grading} and the map $\sdeg$ have to be adapted accordingly as well.
	\item Unfortunately, this approach does not generalize to two-sided ideals, as witnessed by the following example.
\end{enumerate}
\end{remark}

\begin{example}
In this example, we show that the approach described in this section does not generalize to two-sided ideals in a straightforward way.
To this end, we consider the ideal $I = (x_1-x_2) \subseteq K\<x_1,x_2>$ over an arbitrary field $K$.
It is not hard to see that $\mon(I) = \{0\}$.
However, if we consider the two-sided ideal $I^{\overline \varphi} = \left(\overline \varphi(x_1 - x_2)\right) = (\overline{ x_1 t_1 - x_2 t_2}) \subseteq \mathcal{B}$,
then $x_1x_2 - x_2x_1\in I^{\overline \varphi} \cap K\<x_1,x_2>$ as the following computation shows:
\begin{align*}
	&\left(\overline{x_1 t_1 - x_2 t_2}\right) \overline{x_2t_1^{-1}} - \overline {x_2} \left(\overline {x_1 t_1 - x_2 t_2} \right)\overline{t_1^{-1}} \\
	=\; &\left(\overline{x_1x_2} - \overline{x_2x_2t_2t_1^{-1}}\right) - \left(\overline{x_2x_1} - \overline{x_2x_2t_2t_1^{-1}}\right) \\
	=\; &\overline{x_1x_2 - x_2x_1} \in I^{\overline \varphi}.
\end{align*}
\end{example}

To compute a generating set of $\mon(I)$ in case of a two-sided ideal $I \subseteq K\<X>$, we were so far only able to prove the following rather restrictive result.
In what follows, we fix a monomial ordering $\preceq$ on $\<X>$.
Recall that the \emph{tail} of $f \in K\<X>$ is $\tail(f) = f - \lc(f)\lm(f)$, where $\lc(f)$ denotes the leading coefficient of $f$ (w.r.t.\ $\preceq$).

\begin{proposition}\label{prop monomials two sided}
Let $X,Y$ be two disjoint sets of indeterminates and let $I \subseteq K\<X,Y>$ be an ideal.
Furthermore, let $G \subseteq K\<X,Y>$ be a Gr\"obner basis of $I$ such that $\lm(g) \in \<X>$ and $\tail(g)\in K\<Y>$ for all $g \in G$.
Then $G \cap \<X>$ is a Gr\"obner basis of $\mon(I)$.
\end{proposition}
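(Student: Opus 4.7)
Since $G \cap \<X>$ consists of monomials, which coincide with their own leading monomials, the proposition reduces to proving the two-sided ideal identity $\mon(I) = (G \cap \<X>)$. The inclusion $(G \cap \<X>) \subseteq \mon(I)$ is immediate: elements of $G \cap \<X>$ are monomials of $\<X>$ lying in $G \subseteq I$, hence in $\mon(I)$.

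For the reverse inclusion, I plan to show that every monomial $m \in I$ has a subfactor in $G \cap \<X>$, by a minimal counterexample argument. Suppose toward contradiction that $m$ is a $\preceq$-minimal monomial in $I$ with no subfactor in $G \cap \<X>$. By the Gr\"obner basis property, some $\lm(g)$ divides $m$, and by the counterexample assumption $g \notin G \cap \<X>$, so $\tail(g) \neq 0$ and $\tail(g) \in K\<Y>$. Writing $m = a\lm(g)b$, the polynomial
\[
f := a\tail(g)b = agb - m
\]
lies in $I$ and is nonzero, and every monomial of $f$ has the form $awb$ with $w \in \<Y>$. The crucial structural observation is that any element of $\<X>$ dividing such an $awb$ cannot cross the central $Y$-block $w$, so it must lie wholly within $a$ or wholly within $b$. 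If such a subfactor were in $G \cap \<X>$, it would also be a subfactor of $m$, contradicting the counterexample assumption; hence no monomial of $f$ has a subfactor in $G \cap \<X>$, and each is $\prec m$ because $w \prec \lm(g)$ and the monomial ordering is multiplicative.

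The main obstacle is deriving the final contradiction from this $f$: minimality of $m$ says no single monomial of $f$ lies in $I$, yet $f$ itself does. My plan is to continue reducing $f$ by $G$, applying the same block-crossing argument at every step to see that each reducing element must come from $G \setminus (G \cap \<X>)$, since otherwise its leading monomial would exhibit a subfactor of $m$ in $G \cap \<X>$. Iterating the reduction down to zero expresses $m$ as an element of the sub-ideal $I' := (G \setminus (G \cap \<X>))$. The contradiction will then come from showing that $I'$ contains no nonzero monomial lacking a subfactor in $G \cap \<X>$---a statement I intend to establish by a $\deg_Y$-grading argument, exploiting that each generator $g \in G \setminus (G \cap \<X>)$ has its $Y$-degree-zero part $\lm(g) + c_g$ cleanly separated from its strictly positive $Y$-degree components in $K\<Y>$, so that projecting the required representation of $m$ onto $Y$-degree zero forces a cancellation incompatible with $m$ being a pure monomial outside $(G \cap \<X>)$.
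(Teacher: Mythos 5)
Your setup mirrors the paper's key ideas: the polynomial $f = agb - m = a\tail(g)b$, the block-crossing observation that a factor in $\<X>$ cannot straddle a $\<Y>$-block, and the consequence that in the reduction chain every $\lm(g_i)$ must divide $m$ itself. Up to that point you are on the same track as the paper. The genuine gap is in how you propose to land the contradiction. You route the argument through the subideal $I' = (G \setminus (G \cap \<X>))$ and a $\deg_Y$-grading, but that plan does not close: projecting a representation $m = \sum_j p_j g_j q_j$ onto $Y$-degree zero yields $m = \sum_j \pi_0(p_j)\bigl(\lm(g_j) + c_j\bigr)\pi_0(q_j)$ where $c_j$ is the constant term of $\tail(g_j)$. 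Even in the favourable case $c_j = 0$ for all $j$, this only shows that some $\lm(g_j)$ with $g_j \in G \setminus (G \cap \<X>)$ divides $m$; since such a $g_j$ is not a monomial, this is not a subfactor of $m$ in $G \cap \<X>$, and no contradiction results. When some $c_j \neq 0$, the $Y$-degree-zero part of $g_j$ is not a monomial at all and the projection does not isolate anything useful.

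What you are missing is a much more elementary observation, which is exactly how the paper finishes. Once you know the chain $m \to_{g_1} m_1 \to_{g_2} \cdots \to_{g_k} 0$ has the structural form in which every step replaces an $\<X>$-block by a product involving $\tail(g_i) \in K\<Y>$, the final polynomial $m_k$ is (up to sign) a product $a_1\tail(g_{i_1})a_2\cdots a_k\tail(g_{i_k})a_{k+1}$ in which the $a_j \in \<X>$ are nonzero subwords of $m$. For this product to equal zero some $\tail(g_i)$ must vanish, and since $m_{k-1} \neq 0$ it must be $\tail(g_k) = 0$; hence $g_k$ is a monomial, $g_k = \lm(g_k) \in G \cap \<X>$, and by block-crossing $g_k$ divides $m$. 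That both completes the direct argument and, in your minimal-counterexample framing, contradicts your (correct) interim conclusion that every reducing element lies in $G \setminus (G \cap \<X>)$. In short: discard the detour through $I'$ and the grading; the contradiction comes from noticing that the reduction cannot reach zero while every reducing polynomial has nonzero tail.

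Two smaller points. First, you do not need the minimal-counterexample wrapper at all; the paper's argument reads off the desired divisor $g_k \in G \cap \<X>$ directly from the terminating reduction chain. Second, both your block-crossing step and the paper's implicitly assume the $\<Y>$-monomial separating two $\<X>$-blocks is nontrivial; if some $\tail(g_i)$ contributes only a constant, the two adjacent $\<X>$-blocks concatenate and a later $\lm(g_j)$ could straddle the seam. This is worth handling explicitly (for instance by keeping a formal separator in place of the constant so that $\<X>$-blocks are never merged), although it is a shared gap with the paper rather than one specific to your proposal.
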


To prove this proposition, we recall the concept of polynomial (top) reduction in $K\<X>$.
For $f,f',g \in K\<X>$ with $f,g \neq 0$, we write $f \rightarrow_g f'$ if there exist $a,b \in \<X>$ such that $\lm(f) = a\lm(g)b$ and
$f' = f - \frac{\lc(f)}{\lc(g)} a g b$. 
It is well\nobreakdash-known that a generating set $G \subseteq K\<X>$ of an ideal $I \subseteq K\<X>$ is a Gr\"obner basis of $I$ if for every $f \in I$,
there exist $g_1,\dots,g_k \in G$ and $f_0,\dots,f_k \in K\<X>$ such that $f_0 = f$, $f_k = 0$ and $f_{i-1} \rightarrow_{g_i} f_i$ for all $i = 1,\dots,k$.
Additionally, we also need the following result about Gr\"obner bases of monomial ideals, which follows immediately from our definition of a Gr\"obner basis. 

\begin{lemma}\label{lemma monomial ideal 2}
Let $M\subseteq \<X>$ be a set of monomials generating an ideal $(M) \subseteq K\<X>$. Then, $M$ is a Gr\"obner basis of $(M)$. 
\end{lemma}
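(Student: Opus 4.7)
My plan is to verify the two-part definition of a Gröbner basis given in the preliminaries: we must show that $(M) = (M)$, which is trivial, and that $(\lm(M)) = (\lm((M)))$. Since each element of $M$ is already a monomial, $\lm(m) = m$ for all $m \in M$, so $\lm(M) = M$ and the identity to establish simplifies to $(M) = (\lm((M)))$.

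The inclusion $(M) \subseteq (\lm((M)))$ is immediate: every $m \in M$ lies in $(M)$, so $m = \lm(m) \in \lm((M))$, hence $M \subseteq \lm((M))$ and taking the ideal generated on both sides gives the desired containment. The substantive direction is $(\lm((M))) \subseteq (M)$, for which it suffices to show that $\lm(f) \in (M)$ for every nonzero $f \in (M)$.

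For this, I would take an arbitrary nonzero $f \in (M)$ and expand it in the form
\[
  f \;=\; \sum_{i=1}^{k} c_i\, a_i\, m_i\, b_i,
\]
with $c_i \in K$, $m_i \in M$ and $a_i, b_i \in \<X>$. Each monomial $a_i m_i b_i$ appearing here is, by construction, a multiple of $m_i \in M$, so it lies in $(M)$. After collecting like terms and cancellations, the leading monomial $\lm(f)$ must coincide with one of the surviving products $a_j m_j b_j$, because no new monomials can appear beyond those already listed. Hence $\lm(f)$ is a multiple of $m_j \in M$, which places it in $(M)$.

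The only mild subtlety — really the one step that deserves a sentence of justification — is the observation that $\lm(f)$ must literally equal some $a_j m_j b_j$ from the expansion. This follows because $\preceq$ is a total order on $\<X>$ and every monomial in the support of $f$ already occurs among the $\{a_i m_i b_i\}$; in particular, there is no cancellation between distinct generators that could produce a new, larger leading term. Once this is noted, the conclusion is immediate and the lemma follows directly from the definition of a Gröbner basis.
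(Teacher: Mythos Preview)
Your proof is correct and follows exactly the route the paper takes: the paper merely states that the lemma ``follows immediately from our definition of a Gr\"obner basis,'' and your argument spells out precisely why, verifying $(\lm(M)) = (\lm((M)))$ from the fact that every monomial in the support of an element of $(M)$ is a multiple of some $m \in M$. Your explanation is more detailed than the paper's one-line remark, but the approach is identical.
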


\begin{proof}[Proof of Proposition~\ref{prop monomials two sided}]
Denote $M = G \cap \<X>$.
Due to Lemma~\ref{lemma monomial ideal 2}, it suffices to show that $(M) = \mon(I)$.
The inclusion $(M) \subseteq \mon(I)$ is trivial.
For the other inclusion $\mon(I) \subseteq (M)$, let $m\in\mon(I)$ be a monomial.
Since $G$ is a Gr\"obner basis of $I$, there exist $g_1,\dots,g_k \in G$ such that
\begin{align*}
	m \rightarrow_{g_1} m_1 \rightarrow_{g_2} \dots \rightarrow_{g_k} m_k = 0,
\end{align*}
with $m_1,\dots,m_k \in K\<X,Y>$.
In particular, we have that $m_1 = -a\tail(g_1)b$ for some $a,b\in\<X>$.
Since $\lm(g_2) \in \<X>$ but $\tail(g_1) \in K\<Y>$, the second reduction can only act on $a$ or $b$ but not on $\tail(g_1)$.
W.l.o.g. assume that the reduction with $g_2$ acts on $a$.
Then, $m_2 = a'\tail(g_2)a''\tail(g_1)b$.
Analogously, since $\lm(g_3)\in\<X>$, the reduction with $g_3$ cannot act on $\tail(g_1)$ or $\tail(g_2)$ and as this also holds for all other reductions, $m_k$ must be of the form
\begin{align*}
m_k = (-1)^k a_1\tail(g_{i_1})a_2\tail(g_{i_2})\dots a_k\tail(g_{i_k})a_{k+1}
\end{align*}
for some $a_1,\dots,a_{k+1}\in\<X>$ and $\{i_1,\dots,i_k\}= \{1,\dots,k\}$.
Note that all parts in any $m_j$ that lie in $K\<X>$ are subwords of $m$.
Furthermore, since the reductions with all $g_i$ can only act on these parts, we get that $\lm(g_i)$ divides $m$ for all $i = 1,\dots,k$.
Hence, in particular, $\lm(g_k)$ divides $m$.
To finish the proof, we show that $g_k$ is in fact a monomial and therefore contained in $M$.
To see this, we note that $m_k$ can only be zero if one of the factors in this product is zero and since none of the $a_j$ can be zero,
we must have $\tail(g_i) = 0$ for some $1\leq i \leq k$.
In particular, we know that $m_{k-1} \neq 0$, and therefore, $\tail(g_k) = 0$.
This means that $g_k$ is a monomial, and so, $g_k \in M$, which implies that $m \in (M)$.
\end{proof}

It remains open how to compute a generating set of $\mon(I)$ for an arbitrary (finitely generated) two-sided ideal $I \subseteq K\<X>$.


\subsection*{Acknowledgment}

We are greatly indebted to Dragana S.~Cvetkovi\'c-Ili\'c for discussions on operator equations and for pointing us to the problems considered in~\cite{Arias},
which was one of the motivations of this work.
Additionally, we thank the anonymous referees of the extended abstract presented at CASC 2021 for pointing us to related references and topics.
We are also grateful to the anonymous referees of this paper for their valuable suggestions which helped to improve the presentation of this work a lot.


\end{document}